\newcommand{\E}{\mathbb{E}}
\newcommand{\R}{\mathbb{R}}
\DeclareMathOperator{\argmin}{argmin}
\newtheorem{theorem}{Theorem}
\newtheorem{definition}[theorem]{Definition}
\newtheorem{claim}[theorem]{Claim}
\newtheorem{fact}[theorem]{Fact}
\newtheorem*{rep@theorem}{\rep@title}
\newcommand{\newreptheorem}[2]{%
  \newenvironment{rep#1}[1]{%
    \def\rep@title{#2 \ref{##1}}%
    \begin{rep@theorem}}%
  {\end{rep@theorem}}}
\newcommand{\eqdef}{\mathbin{\stackrel{\mathrm{def}}{=}}}
\title{Navigable Graphs for High-Dimensional Nearest Neighbor Search: Constructions and Limits}
\author{
	Haya Diwan \\ New York University\\ \texttt{hd2371@nyu.edu}
	\And 
 	Jinrui Gou \\ New York University\\ \texttt{jg6226@nyu.edu}
  \And 
	Cameron Musco\\ UMass Amherst\\ \texttt{cmusco@cs.umass.edu}
	\And
	Christopher Musco\\ New York University\\ \texttt{cmusco@nyu.edu}
 	\And 
 	Torsten Suel \\ New York University\\ \texttt{torsten.suel@nyu.edu}
}
  \newcommand{\cAAAI}[1]{AAAI\ Conference\ on\ Artificial (AAAI)}
\begin{document}

\maketitle

\begin{abstract}
  There has been recent interest in \emph{graph-based nearest neighbor search methods}, many of which are centered on the construction of (approximately) {\em navigable}\/ graphs over high-dimensional point sets. A graph is navigable if we can successfully move from any starting node to any target node using a greedy routing strategy where we always move to the neighbor that is closest to the destination according to the given distance function. The complete graph is obviously navigable for any point set, but the important question for applications is if sparser graphs can be constructed. While this question is fairly well understood in low-dimensions, we establish some of the first upper and lower bounds for high-dimensional point sets. First, we give a simple and efficient way to construct a navigable graph with average degree $O(\sqrt{n \log n })$ for any set of $n$ points, in any dimension, for any distance function. We compliment this result 
  with a nearly matching lower bound: even under the Euclidean metric in $O(\log n)$ dimensions, a random point set has no navigable graph with average degree $O(n^{\alpha})$  for any $\alpha < 1/2$. Our lower bound relies on sharp anti-concentration bounds for binomial random variables, which we use to show that the {near-neighborhoods} of a set of random points do not overlap significantly, forcing any navigable graph to have many edges.
\end{abstract}

\section{Introduction}
The concept of a \emph{navigable graph} has arisen repeatedly over the decades, perhaps most famously in Kleinberg's  work on understanding Milgram's ``Small World'' experiments from the 1960s \citep{Kleinberg:2000,Kleinberg:2000b,Milgram:1967}. Concretely, suppose we are given $n$ points $x_1, \ldots, x_n \in \mathcal{X}$ in some input domain $\mathcal{X}$, a distance function $D:\mathcal{X} \times  \mathcal{X} \rightarrow \R^{\geq 0}$, and a directed graph $G = (V,E)$, where each vertex in $V = \{1, \ldots, n\}$ is associated with one of our points. $G$ is said to be \emph{navigable} if the standard \emph{greedy routing} algorithm successfully finds a path between {any} starting vertex $s\in V$ and {any} target vertex $t\in V$.\footnote{$G$ is further considered ``small-world'' if greedy routing always terminates in a small number of steps.} In particular, letting $\mathcal{N}(s)$ denote the out-neighbors of $s$, this algorithm first navigates to $r\in \mathcal{N}(s)$ which minimizes $D(x_r, x_t)$. At each subsequent step, we navigate to the out-neighbor of the current node that minimizes the distance to $x_t$, terminating once we reach $x_t$, or if  no neighbor has an associated point that is closer to $x_t$ than the current node. 

% \footnote{See \Cref{alg:greedy_routing} for formal pseudocode. Note that navigability is just as easily defined for undirected graphs, and all of our upper and lower bounds  hold in that setting as well.} 

It has been observed that many real-world networks (the internet, airport networks, social networks, etc.) are either navigable or almost navigable, where $x_i$ plays the role of, e.g., the physical coordinates of a server or individual and $D$ is the standard Euclidean distance or some other metric \citep{BogunaKrioukovClaffy:2009}. Moreover, there has been interest in showing that natural generative models for networks produce navigable graphs with good probability \citep{Kleinberg:2000,WattsStrogatz:1998}.

\subsection{Constructing Sparse Navigable Graphs}
More recently, significant work has studied the problem of \emph{constructing} navigable or ``approximately'' navigable graphs given a point set $x_1, \ldots, x_n \in  \mathcal{X}$ and distance function $D$. For any $D$ and any point set, the complete graph is navigable, so more concretely, the goal is to construct a navigable graph that is \emph{as sparse as possible}. At a high-level, this objective underlies many recently developed graph-based approximate nearest neighbor search methods such as DiskANN \citep{SubramanyaDevvritKadekodi:2019}, the Hierarchical Navigable Small World (HNSW) method \citep{MalkovYashunin:2020,MalkovPonomarenkoLogvinov:2014}, and the Navigating Spreading-out Graph (NSG) method \citep{FuXiangWang:2019}.\footnote{Note that none of these methods actually claim to construct sparse graphs that are navigable in the worst-case. They use heuristics to build graphs that are ``approximately navigable'', meaning that, empirically, greedy search tends to find good approximate nearest neighbors when run on the graphs.} Such methods have shown remarkable empirical performance, outperforming state-of-the-art implementations of popular approximate nearest neighbor search algorithms such as product-quantization and locality-sensitive hashing \citep{JohnsonDouzeJegou:2021,JegouDouzeSchmid:2011,IndykMotwani:1998,AndoniIndykLaarhoven:2015}. The computational efficiency of the graph-based methods is governed by the number of edges in the graph being searched, motivating the need for sparse navigable graphs. 

Despite this recent interest, there has been relatively little theoretical work on the problem of constructing navigable graphs. 
When the input points lie in $\R^d$ and $D$ is the Euclidean distance function, it is not hard to check that the Delaunay graph is navigable.\footnote{The Delaunay graph connects $i,j$ if $x_i$ and $x_j$ have adjacent cells in a Voronoi diagram for $x_1, \ldots, x_n$.} While the Delaunay graph has average degree $O(1)$ in dimension $d=2$ (since it is planar) it can have average degree $O(n)$ in dimension $d=3$ or higher \citep{Klee:1980}. A better bound can be obtained via the so-called \emph{sparse neighborhood graphs} of \citep{AryaMount:1993}, which are shown to be navigable for any point set in $\R^d$ under the Euclidean distance and have average degree $2^{O(d)}$.\footnote{Such graphs are closely related to but not the same as ``relative neighborhood graphs'' \citep{Toussaint:1980}.} While this results in a sparse navigable graph for small values of $d$, the degree bound is no better than that of the complete graph for $d = \Omega(\log n)$. Given that modern applications of nearest neighbor search often involve  high dimensional data points, it is natural to ask if anything better can be done in high dimensions. 

\subsection{Our Results}
The main contribution of this work is to provide tight upper and lower bounds on the sparsity required to construct navigable graphs for high-dimensional point sets. In particular, we prove two main results. The first is a strong upper bound that follows from a straight-forward graph construction:
\begin{theorem}
\label{thm:main_upper_bound}
For any input domain $ \mathcal{X}$, point set $x_1, \ldots, x_n \in  \mathcal{X}$, and distance function $D: \mathcal{X} \times  \mathcal{X} \rightarrow \R^{\ge 0}$ such that $D(x_i,x_i) = 0$ for all $i$ and $D(x_i,x_j) > 0$ for $x_j \neq x_i$, it is possible to efficiently construct a directed navigable graph with average degree at most $ 2\sqrt{n\ln n}$. Moreover, the graph has the additional ``small world'' property: greedy routing always succeeds in at most 2 steps.
\end{theorem}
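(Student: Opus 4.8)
The plan is to construct $G$ as the union of two very simple edge sets. Fix $k \eqdef \lceil \sqrt{n\ln n}\,\rceil$, and, breaking ties in $D$ by vertex index so that all comparisons below may be taken strict (we may also assume the points are distinct, since otherwise greedy routing forces duplicates to be joined by an edge), let $T_t$ be the set of the $k$ vertices $v\neq t$ whose $x_v$ is closest to $x_t$. The \emph{target edges} are all pairs $v\to t$ with $v\in T_t$; there are exactly $nk$ of them, i.e.\ $k$ per vertex on average. What this buys us is that the in-neighborhood $\mathcal{C}_t\eqdef\{v:(v\to t)\in E\}$ equals $T_t$, which is \emph{downward closed} with respect to distance to $x_t$: any vertex at least as close to $x_t$ (in the tie-broken order) as some element of $T_t\cup\{t\}$ is itself in $T_t\cup\{t\}$.

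Next I would add a single shared family of \emph{long-range edges}. The crucial observation is that the routing requirement ``$\mathcal{N}(s)$ meets $T_t$ for every target $t$'' does not depend on the source $s$, so one \emph{hitting set} $H$ with $H\cap T_t\neq\emptyset$ for all $t$ can be reused by every vertex. Such an $H$ of size $m$ exists whenever $n\,e^{-mk/n}<1$: a uniformly random $m$-subset misses a fixed $T_t$ with probability $\binom{n-k}{m}/\binom{n}{m}\le (1-k/n)^m\le e^{-mk/n}$, and a union bound over the $n$ targets finishes it; with our choice of $k$ this succeeds already for $m=k+1$, and the same estimate applied to the greedy set-cover heuristic yields such an $H$ deterministically in polynomial time. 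Now add the edge $s\to h$ for every vertex $s$ and every $h\in H$. The number of edges is at most $nk + n|H| - k|H|$ (subtracting the edges counted twice, namely $a\to b$ with $a\in T_b$ and $b\in H$); balancing the two budgets $k$ and $(n\ln n)/k$, which is minimized at $k=\sqrt{n\ln n}$, the overlap term $k|H|\approx n\ln n$ comfortably absorbs the ceiling losses and yields average degree at most $2\sqrt{n\ln n}$.

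For the routing analysis, fix $s\neq t$. If $t\in\mathcal{N}(s)$ --- in particular whenever $s\in T_t$ or $t\in H$ --- then, since $x_t$ is at distance $0$, the first greedy step goes straight to $t$. Otherwise $s\notin T_t\cup\{t\}$, so every element of $T_t\cup\{t\}$ is strictly closer to $x_t$ than $x_s$ is. Because $H\subseteq\mathcal{N}(s)$ and $H$ hits $T_t$, the greedy choice $r^*\eqdef\argmin_{r\in\mathcal{N}(s)}D(x_r,x_t)$ is at least as close to $x_t$ as some element of $T_t$; hence by downward closure $r^*\in T_t\cup\{t\}$, and it is strictly closer to $x_t$ than $x_s$, so the step is legal. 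If $r^*=t$ we are done in one step; otherwise $r^*\in T_t=\mathcal{C}_t$, so $(r^*\to t)\in E$ and the second greedy step goes directly to $t$. Thus greedy routing always succeeds within two steps.

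I expect the only delicate point to be the interplay between greedy tie-breaking and a ``no blocking'' guarantee: greedy always follows the single closest out-neighbor, so one must rule out the case where that particular vertex is a far-away neighbor lacking an edge to $t$ that gets preferred over a useful one. Downward-closedness of $T_t$ is precisely what prevents this, and it is the reason the target edges are defined via ``the $k$ vertices nearest \emph{to} $t$'' rather than via the ordinary $k$-nearest-neighbor graph, whose in-neighborhoods do not have this property. The remaining, more routine, work is pinning the leading constant to $2$, which is just the optimization $\min_k\big(k+(n\ln n)/k\big)$, and checking the polynomial-time bounds for computing the $T_t$'s and the hitting set.
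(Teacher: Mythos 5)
Your proposal is correct and takes essentially the same approach as the paper: a near-neighbor graph (edges from each target's $k$ nearest points back to the target) unioned with edges from every vertex to a small hitting set, obtained either probabilistically or via greedy set cover, with navigability argued by observing that greedy first lands in the target's downward-closed near-neighborhood and then takes the direct edge to the target. The only cosmetic differences are that you organize the routing argument around downward closure rather than the paper's permutation property~\eqref{eq:permutation_property}, and you subtract the double-counted overlap edges to tighten the constant, neither of which changes the construction.
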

\Cref{thm:main_upper_bound} establishes that, even in arbitrarily high dimension, it is possible to beat the naive complete-graph solution, which has $O(n^2)$ edges (average degree $n$). The result is proven in \Cref{sec:pos_res}. It is based on an simple construction, reminiscent of existing techniques for building nearest neighbor search graphs: we take the union of a $O(\sqrt{n \log n})$-nearest neighbor graph, and a random graph with average degree $O(\sqrt{n \log n})$  \citep{MalkovPonomarenkoLogvinov:2014,Kleinberg:2000b,SubramanyaDevvritKadekodi:2019}. 

Surprisingly, \Cref{thm:main_upper_bound} holds for essentially any distance function, even if it is not a metric.  Moreover, the construction is efficient: the navigable graph can be computed in $O(n^2(T + \log n))$ time, where $T$ is a bound on the cost of computing $D(x_i,x_j)$ for any $i,j$. %Typically, $T \approx d$ for points in $\R^d$.
Given the generality of \Cref{thm:main_upper_bound}, we might expect that navigable graphs with even fewer edges could be constructed under additional assumptions -- e.g., if we considered only the special case where the input domain is $\R^d$ and $D$ is the Euclidean distance. 
Our next result rules this out when $d = \Omega(\log n)$. In particular, we show: 

\begin{theorem}
\label{thm:main_lower_bound}
Let $x_1, \ldots, x_n \in \R^d$ be vectors with i.i.d. random $\pm 1$ entries, and let $D(x_i,x_j) = \|x_i - x_j\|_2$ be the Euclidean distance. For any parameter $\delta > 0$, if $d \geq \frac{c}{\delta^2} \log n$ for a fixed constant $c$, then with high probability, any navigable graph for $x_1, \ldots, x_n$ requires average degree $\Omega(n^{1/2 - \delta})$.
\end{theorem}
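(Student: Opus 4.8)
The plan is to combine a clean local characterization of navigability with the observation that, for random $\pm 1$ points, the near-neighborhoods of distinct points are almost disjoint, which forces many edges.

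\emph{Step 1 — local characterization.} First I would record that, since the $x_i$ are distinct with high probability (each pair collides with probability $2^{-d}=n^{-\Theta(c/\delta^2)}$, so a union bound over pairs suffices when $c$ is a large enough constant), a graph $G$ is navigable if and only if for every ordered pair $c\neq t$ the out-neighborhood $\mathcal N(c)$ contains either $t$ itself or some $r$ with $\|x_r-x_t\|_2<\|x_c-x_t\|_2$. Necessity is immediate; sufficiency holds because greedy routing strictly decreases $\|\cdot-x_t\|_2$ along a finite vertex set and the only node at distance $0$ from $x_t$ is $t$. Equivalently, fixing a target $t$ and grouping the other points into distance ``shells'' $L_1(t),L_2(t),\dots$ around $x_t$, every node outside $L_1(t)$ must have an out-edge to a strictly closer node.

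\emph{Step 2 — forced edges inside near-neighborhoods, and double counting.} Fix a parameter $K$ (to be taken slightly above $\sqrt n$) and let $B_t$ be the union of the first few shells around $x_t$, chosen minimally with $|B_t|\ge K+1$. Every node in $B_t\setminus L_1(t)$ has a forced out-edge landing in $B_t\cup\{t\}$; since the closest shell of a random point has size $\log^{O(1)} n$ with high probability, there are at least $K-\log^{O(1)} n$ such nodes, hence at least $K-\log^{O(1)}n$ distinct directed edges with both endpoints in $B_t\cup\{t\}$, for every $t$. Now count pairs $(e,t)$ with $e\in E$ and both endpoints of $e$ in $B_t\cup\{t\}$: the above gives a lower bound of $n\bigl(K-\log^{O(1)}n\bigr)$, while the same count is at most $|E|\cdot M$ where $M:=\max_{u\neq v}\bigl|\{t:\{u,v\}\subseteq B_t\cup\{t\}\}\bigr|\le 2+\max_{u\neq v}\bigl|\{t:\{u,v\}\subseteq B_t\}\bigr|$. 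Therefore the average degree $|E|/n$ is at least $\bigl(K-\log^{O(1)}n\bigr)/M$, and it remains to upper bound $M$.

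\emph{Step 3 — near-neighborhoods barely overlap (the crux).} Using $\|x_i-x_j\|_2^2=4\cdot\mathrm{Ham}(x_i,x_j)$ with $\mathrm{Ham}(x_i,x_j)\sim\mathrm{Bin}(d,1/2)$ for $i\neq j$, bounding $M$ reduces to three binomial tail estimates, all of which must be sharp. (a) With high probability every point has at least $K+1$ data points within Hamming distance $R:=d/2-\Theta(\sqrt{d\log n})$, so $B_t\subseteq B_{\mathrm{Ham}}(x_t,R)$ for all $t$; this needs a tight \emph{lower} bound on $\Pr[\mathrm{Bin}(d,1/2)\le R]$ together with concentration of the ball counts. (b) With high probability the closest pair of data points has Hamming distance at least $d/2-\Theta(\sqrt{d\log n})$; this is a union bound over the $\binom n2$ pairs using a tight \emph{upper} bound on the same tail. (c) For every pair $u\neq v$, the number of data points in $B_{\mathrm{Ham}}(x_u,R)\cap B_{\mathrm{Ham}}(x_v,R)$ is with high probability at most $n^{o(1)}\cdot n^{O(\delta/\sqrt c)}\cdot\max\!\bigl(1,K^2/n\bigr)$: conditioning on $x_t$ and splitting the coordinates according to where $x_u$ and $x_v$ agree versus disagree expresses the event ``$x_t$ in both balls'' as a product of a one-dimensional and a two-dimensional binomial tail event, whose probability one estimates precisely using the radius from (a) and the worst-case center distance from (b). Combining, $M\le n^{O(\delta/\sqrt c)}\log^{O(1)}n\cdot\max(1,K^2/n)$, and choosing $K=n^{1/2+\Theta(\delta)}$ makes the average-degree bound $\bigl(K-\log^{O(1)}n\bigr)/M\ge n^{1/2-\Theta(\delta)-O(\delta/\sqrt c)}$, which is $\Omega(n^{1/2-\delta})$ once $c$ is chosen large enough.

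\emph{Main obstacle.} Everything outside Step 3 is bookkeeping: the local characterization, tie-handling between shells, and the union bounds over the $n$ targets and $\binom n2$ pairs. The genuine difficulty is the sharp two-sided binomial analysis in Step 3(c): the intersection-of-balls probability depends delicately on the interplay of the radius $R$, the inter-center distance, and $d$, and since all relevant deviations live on the $\sqrt{d\log n}$ scale while the exponents scale like $(\sqrt{d\log n})^2/d$, the exponents must be controlled to within $o(\delta\log n)$. Ordinary Chernoff bounds lose constant factors in these exponents and would only yield $n^{1/2-\Theta(\delta)}$; getting $n^{1/2-\delta}$ is exactly what forces the hypothesis $d\ge\frac{c}{\delta^2}\log n$ with a suitably large constant $c$.
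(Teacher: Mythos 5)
Your proposal is correct and takes essentially the same route as the paper: a forced-edge characterization of navigability, a double count of edge--target incidences inside near-neighborhoods, and (the crux, as you correctly flag) sharp two-sided binomial tail bounds to show that the $\Theta(\sqrt{n})$-sized near-neighborhoods of random $\pm1$ points barely overlap. The only difference worth noting is that you work with size-based shell neighborhoods and then argue containment in a fixed-radius Hamming ball, whereas the paper defines a fixed inner-product-threshold neighborhood $\mathcal{O}_j$ directly, which makes the overlap identity $|\{t: u,v\in\mathcal{O}_t\}|=|\mathcal{O}_u\cap\mathcal{O}_v|$ immediate and avoids the extra containment step.
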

%TS: don't we have to introduce delta.
%TS: update -- issue resolved. no need. \Cam{Actually i thought you were refering to Thm 1}
\Cref{thm:main_lower_bound} is proven in \Cref{sec:neg_res}. It is a corollary of our more general \Cref{thm:lower_bound_detailed}, which also implies a lower bound of $\Omega(n^{1/2}/\log n)$ average degree when $d = \Omega(\log^3 n)$. The proof starts with a straight forward observation: in order for greedy routing to make progress towards a destination node $x_i$,  any node within the $k$-nearest neighbor set of $x_i$, for any $k$, must include an edge to some other node in that set (possibly $x_i$ itself). Using sharp anti-concentration bounds for binomial random variables \citep{Ahle:2022,Cramer:2022}, we argue that, when $k = O(\sqrt{n})$ and when $x_1,\ldots,x_n \in \{-1,1\}^d$ are random for large enough $d$, the nearest neighbor sets for different destination nodes have very small pairwise intersections. Intuitively, they are nearly independent random sets of size $O(\sqrt{n})$, and thus have expected overlap close to $1$. This small overlap means that few edges can be used to `cover' the required connections within different nearest neighborhoods, giving a lower bound on the average degree of any navigable graph.

We remark that \Cref{thm:main_lower_bound} cannot be improved significantly in its bound on the dimension $d$. As mentioned, for the Euclidean distance over $\R^d$, it is possible to construct navigable graphs with $2^{O(d)}$ edges for any $d$ dimensional point set using the sparse neighborhood graphs of \citep{AryaMount:1993}. This leads to average degree less than $n^{1/2}$ when $d = c\log n$ for a small enough constant $c$.

\subsection{Outlook}
Together, \Cref{thm:main_upper_bound,thm:main_lower_bound} help complete the picture of what level of sparsity is achievable when constructing navigable graphs in high-dimensions. However, these bounds are not the end of the story. For one, there are many variations on simple greedy search that would lead to other notions of navigability. 
For example, in nearest neighbor search applications, a version of greedy search called \emph{beam search}, which explores multiple greedy paths, is often preferred \citep{SubramanyaDevvritKadekodi:2019}. 
% It would be interesting to ask if sparser graphs are achievable when navigating with beam search, and what the trade-off is between sparsity and beam-width (greedy search corresponds to beam-width $1$). 
% In additional to different search algorithms, it might be interesting to explore relaxed variations on navigability that are still natural for nearest neighbor search. For example, how does the problem change if we always start searching from a single starting node $s$? 

Beyond the {average degree}, which we focus on in this work, % as we do in \Cref{thm:main_upper_bound}, is not the only possibility. For
the \emph{maximum degree} of a navigable graph is also a natural metric, governing the maximum  complexity of each iteration of greedy search. Unfortunately, for the navigability problem we study, we show in Section \ref{app:appendix_max_deg_lower} that there are point sets for which \emph{every} navigable graph must have maximum degree $n$. 
It would be interesting if relaxations of the problem or a more flexible search method can avoid this limitation. 
% One possibility, e.g., is to only require greedy search to succeed when starting from some fixed vertex $x_s$, rather than all possible starting vertices. 

Finally, an important direction for future work is to prove end-to-end approximation guarantees for graph-based nearest neighbor search algorithms. Since finding \emph{exact} nearest neighbors in high dimensions suffers from challenges related to the curse of dimensionality, a reasonable goal would be to prove that greedily routing towards any query $q \in \mathcal{X}$ converges on an $\alpha$-approximate nearest neighbor $x_j$ satisfying $D(q,x_j) \leq \alpha\cdot \min_i D(q,x_i)$ for some $\alpha \geq 1$. This is the sort of guarantee that locality sensitive hashing and other methods can achieve with query time that is provably \emph{sublinear in $n$}, i.e., without needing to directly compare $q$ to all vectors $x_1, \ldots, x_n$ \citep{Kleinberg:1997,KushilevitzOstrovskyRabani:1998,IndykMotwani:1998,Har-Peled:2001}. Importantly, if regular greedy search is applied from an arbitrary starting node, navigability is a \emph{necessary condition} for $\alpha$-approximate nearest-neighbor search. In particular, for any finite $\alpha$, if $q\in \{x_1, \ldots, x_n\}$, $\min_i D(q,x_i) = 0$, so we must return $q$ exactly. However, navigability is not a \emph{sufficient condition} for greedy search to succeed, as it does not guarantee any level of approximation for queries $q$ that are not in $\{x_1, \ldots, x_n\}$. 
For some initial work on this more challenging problem, we refer to the reader to \cite{Laarhoven:2018}, \cite{ProkhorenkovaShekhovtsov:2020}, and \cite{IndykXu:2023}.

\section{Preliminaries}\label{sec:prelims}

\noindent\textbf{Notation.} Throughout, we consider %use the same notation as in the introduction. We are given 
a set of $n$ distinct points $x_1, \ldots, x_n \in \mathcal{X}$ for some input domain $\mathcal{X}$ and a distance function $D:\mathcal{X} \times \mathcal{X} \rightarrow \R^{\geq 0}$. $D(x_i,x_j)$ denotes the distance from point $j$ to point $i$. We assume only that $D(x_i,x_i) = 0$ for all $i$ and that $D(x_i,x_j) > 0$ for $x_j \neq x_i$.\footnote{Even these assumptions are not needed, but they make our definition of ``navigable'' more natural. Otherwise, we must allow for greedy routing to navigate to any vertex $x_j \in \argmin_{x_1,\ldots,x_n} D(x_j,x_t) $ given target $x_t$.} 
%TS: I think we might need to allow greedy routing to do this in other cases as well when there are ties, or not? 
%TS: update -- tie breaking should resolve this.
Our main upper bound, \Cref{thm:main_upper_bound}, does not require $D$ to be a metric, or even to be symmetric.  Our main lower bound, \Cref{thm:main_lower_bound}, holds against  the standard Euclidean distance $D(x_i, x_j) = \|x_i - x_j\|_2$. % = \left(\sum_{k=1}^d (x_i[k] - x_j[k])^2 \right)^{1/2}$, where $x_i[k]$ and $x_j[k]$ denote the $k^\text{th}$ entries of $x_i$ and $x_j$, respectively. 

We aim to construct a directed graph $G = (V,E)$, where each vertex in $V = \{1, \ldots, n\}$ is associated with one of our input points. Each edge $e\in E$ is an ordered pair $(i,j)$, indicating that there is an edge from node $i$ to node $j$. Throughout, we let $\ln n$ denote the natural base-$e$ logarithm of $n$, and $\log n$ denote the base-2 logarithm of $n$.

 \begin{algorithm}[h!]
    \caption{Greedy Search}
    \label{alg:greedy_routing}
    \begin{algorithmic}[1]
        \State \textbf{Input:} Graph $G$ over nodes $\{1, \ldots, n\}$, starting node $s$, query point $\bar{x}$. 
        \State \textbf{Output:} A point $x_j$ that is ideally close to $\bar{x}$ with respect to distance function $D$.
        % \State Initialize set $V = [s]$ as the set of visited nodes
        \State $j \gets s$, Done $ \gets$ False
        \While{Done $ == $ False}
            \If{$\mathcal{N}(j) = \emptyset$ }
            \State Done $ \gets $ True.
            \Else
                \State $h \gets \argmin_{i \in \mathcal{N}(j)} D(\bar{x}, x_i)$, where ties are broken to prefer nodes with the lowest id.\footnotemark
                \If{$D(\bar{x}, x_h) < D(\bar{x}, x_j)$}
                    \State $j\gets h$.
                \ElsIf{$D(\bar{x}, x_h) = D(\bar{x}, x_j)$ and $h < j$} \hspace{2em}\Comment{Tie-breaking on node id.}
                    \State $j\gets h$.
                \Else
                \State Done $ \gets $ True.
                \EndIf
            \EndIf
        \EndWhile
        \State Return $x_j$
    \end{algorithmic}
    \end{algorithm}

\smallskip

\noindent\textbf{Distance-Based Permutations.}
We let $\mathcal{N}(i)$ denote the out-neighbors of node $i$ in the graph; i.e., $j\in \mathcal{N}(i)$ if and only if $(i,j)\in E$. We use the notation $
    N_1(i), \ldots, N_{n}(i)$ 
to index the list of nodes in the graph 
%TS: changed denote to index
ordered in non-decreasing order by their distances from $i$; i.e., for $k <\ell$, $D(x_i,x_{N_k(i)}) \leq D(x_i,x_{N_\ell(i)})$. Ties are broken by node id. 
Specifically, whenever $D(x_i,x_{N_k(i)}) = D(x_i,x_{N_\ell(i)})$ and $k < \ell$, then $N_k(i) < N_\ell(i)$. 
%Specifically, if $D(x_i,x_{N_j(i)}) = D(x_i,x_{N_k(i)})$ then $j <k$ if and only if $N_j(i)$ has node id less than that of $N_k(i)$. 
This choice is arbitrary, but a consistent way of breaking ties will simplify our exposition. For most applications, we will not have distances repeat \emph{exactly} in the dataset, so tie breaking is never invoked. 
Note that since we assume $x_1,\ldots,x_n$ are distinct and that $D(x_i,x_i) = 0$ for all $i$ and $D(x_i,x_j) > 0$ for all $j \neq i$, we always have that $N_1(i) = i$.

    \footnotetext{Formally, $h\gets \min\left(\{\argmin_{i \in \mathcal{N}(j)} D(\bar{x}, x_i)\}\right)$, where $\argmin_t f(t)$ returns the set of minimizers of $f$.} 
    
\noindent\textbf{Greedy Search and Navigability.}
We study a notion of navigability that is tied to the standard greedy graph search algorithm for nearest neighbors, which is detailed in \Cref{alg:greedy_routing}. It can be checked that the algorithm always terminates in at most $n$ iterations since $j$ can only be equal to every node in the graph at most once. Additionally, we note that Line 11 and 12 are unnecessary in the case when distances are assumed to be unique. When this is not the case, these lines implement our arbitrary tie-breaking rule, which is to prefer nodes with lower id when distances are equal. 

Given \Cref{alg:greedy_routing}, we define navigability formally as follows:
\begin{definition}[Navigable Graph]
\label{def:nav}
A graph $G$ is \emph{navigable} for point set $x_1, \ldots, x_n$ under distance function $D$ if, for all $s,t \in \{1, \ldots, n\}$, when \Cref{alg:greedy_routing} is run with starting node $s$ and query $\bar{x} = x_t$, then the algorithm  returns $x_t$. %with $D(\bar{x},{x}_j) = 0$. 
I.e., when the query $\bar{x}$ exactly matches a point in the dataset, the algorithm finds that point.
We further say that $G$ is ``small world'' with parameter $S$ if the algorithm always terminates after at most $S$ calls to the while loop. 
\end{definition}

It will be useful to think about navigability as a property of the distance-based node permutations defined earlier. In particular, navigability is \emph{implied} by the following property:
\begin{align}
\label{eq:permutation_property}
\text{For all $t$ and all $\ell>1$, there is an edge from $N_{\ell}(t)$ to $N_k(t)$ for at least one $k < \ell$.}
\end{align}
In particular, when given a query $x_t \in \{x_1, \ldots, x_n\}$, \Cref{alg:greedy_routing} will only move from nodes $N_{\ell}(t)$ to $N_k(t)$ for which $k< \ell$. Moreover, as long as there is such a $k$ in the out-neighborhood of $N_{\ell}(t)$, then the algorithm will not terminate at $N_{\ell}(t)$. It follows that, if \eqref{eq:permutation_property} holds, the algorithm is guaranteed to terminate at $N_1(t) = t$ and return $x_t$, as desired.
We remark that, if all distances between nodes are distinct, \eqref{eq:permutation_property} is equivalent to the navigability property of Definition \ref{def:nav}, although we will not require this fact. To better illustrate the connection between \eqref{eq:permutation_property} and Definition \ref{def:nav}, we include an example of a navigable graph in \Cref{fig:navigable_graph_full} and the corresponding list of distance-based permutations in \Cref{fig:permutation_view}.

\section{Upper Bound}\label{sec:pos_res}
We begin by proving our main positive result, which we restate below:

\begin{reptheorem}{thm:main_upper_bound}
For any input domain $ \mathcal{X}$, point set $x_1, \ldots, x_n \in  \mathcal{X}$, and distance function $D: \mathcal{X} \times  \mathcal{X} \rightarrow \R^{\ge 0}$ such that $D(x_i,x_i) = 0$ for all $i$ and $D(x_i,x_j) > 0$ for $x_j \neq x_i$, it is possible to efficiently construct a directed navigable graph with average degree at most $ 2\sqrt{n\ln n}$. Moreover, the graph has the additional ``small world'' property: greedy routing always succeeds in at most 2 steps.
\end{reptheorem}

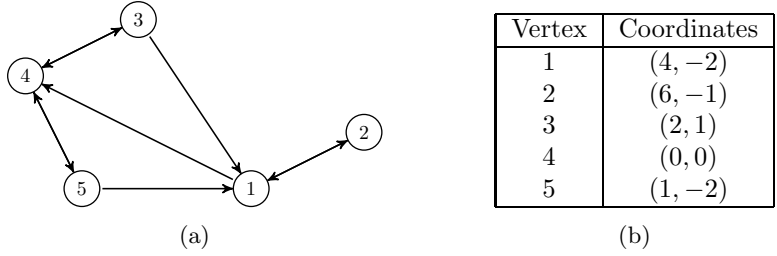
\begin{figure}[h]
% \centering
\centering
  % \fbox{\rule[-.5cm]{0cm}{4cm} \rule[-.5cm]{4cm}{0cm}}
\vspace{-.5em}
\begin{subfigure}{0.4\textwidth}
\centering
\begin{tikzpicture}[scale=0.75,transform shape]
  \Vertex[x=0,y=0,L=$4$]{x4}
  \Vertex[x=2,y=1,L=$3$]{x3}
  \Vertex[x=4,y=-2,L=$1$]{x1}
  \Vertex[x=6,y=-1,L=$2$]{x2}
  \Vertex[x=1,y=-2,L=$5$]{x5}

  % \tikzstyle{EdgeStyle}=[post,bend right]
  \tikzstyle{EdgeStyle}=[post]
  \Edge[style={->}](x1)(x2)
  \Edge[style={->}](x1)(x4)

  \Edge[style={->}](x2)(x1)

  \Edge[style={->}](x3)(x1)
  \Edge[style={->}](x3)(x4)

  \Edge[style={->}](x4)(x3)
  \Edge[style={->}](x4)(x5)

  \Edge[style={->}](x5)(x1)
  \Edge[style={->}](x5)(x4)
\end{tikzpicture}
\caption{}
\label{fig:navigable_graph}
\end{subfigure}%
\begin{subfigure}{0.35\textwidth}
\centering
\begin{tabular}{|c|c|}
\hline
Vertex & Coordinates \\
\hline
$1$ & $(4,-2)$ \\
$2$ & $(6,-1)$ \\
$3$ & $(2, 1)$ \\
$4$ & $(0, 0)$ \\
$5$ & $(1,-2)$ \\
\hline
\end{tabular}
\caption{}
\label{fig:vertex_coordinates}
\end{subfigure}
\caption{Example of a navigable graph $G = (V,E)$ on $5$ nodes. A double arrow indicates that both $(i,j)\in E$ and $(j,i)\in E$. We can check that $G$ is navigable by referring to \Cref{fig:permutation_view}.}
\label{fig:navigable_graph_full}
\vspace{-.5em}
\end{figure}

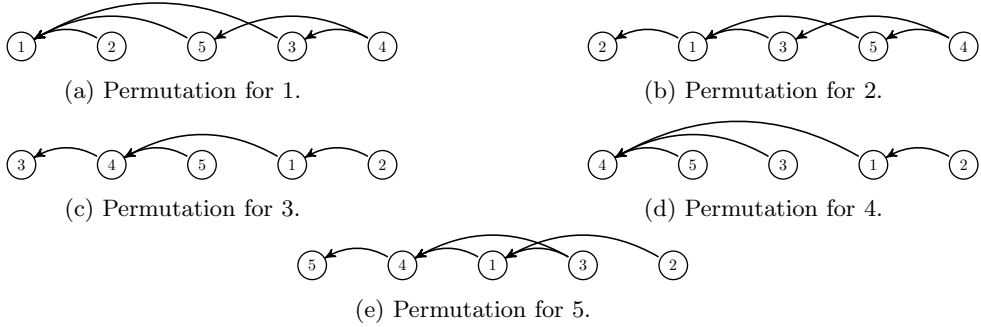
\begin{figure}[h]
    \centering
    \begin{subfigure}{0.3\textwidth}
        \centering
        \begin{tikzpicture}[scale=0.60,transform shape]
            \Vertex[x=0,y=0,L=$1$]{x1}
            \Vertex[x=2,y=0,L=$2$]{x2}
            \Vertex[x=4,y=0,L=$5$]{x5}
            \Vertex[x=6,y=0,L=$3$]{x3}
            \Vertex[x=8,y=0,L=$4$]{x4}
            
            \tikzstyle{EdgeStyle}=[post,bend right]
            \Edge[style={->}](x2)(x1)
            \Edge[style={->}](x5)(x1)
            
            % \Edge[style={->}](x1)(x5)
            % \Edge[style={->}](x2)(x5)
            % \Edge[style={->}](x3)(x5)
            \Edge[style={->}](x3)(x1)
            \Edge[style={->}](x4)(x3)
            \Edge[style={->}](x4)(x5)
        \end{tikzpicture}
        \caption{Permutation for $1$.}
    \end{subfigure}
    \hspace{8em} % Adjust the space between columns here
    \begin{subfigure}{0.3\textwidth}
        \centering
        \begin{tikzpicture}[scale=0.60,transform shape]
            \Vertex[x=0,y=0,L=$2$]{x2}
            \Vertex[x=2,y=0,L=$1$]{x1}
            \Vertex[x=4,y=0,L=$3$]{x3}
            \Vertex[x=6,y=0,L=$5$]{x5}
            \Vertex[x=8,y=0,L=$4$]{x4}
            
            \tikzstyle{EdgeStyle}=[post,bend right]
            \Edge[style={->}](x1)(x2)
            \Edge[style={->}](x3)(x1)
            
            \Edge[style={->}](x5)(x1)
            % \Edge[style={->}](x2)(x1)
            % \Edge[style={->}](x3)(x1)
            % \Edge[style={->}](x4)(x1)
            \Edge[style={->}](x4)(x5)
            \Edge[style={->}](x4)(x3)
        \end{tikzpicture}
        \caption{Permutation for $2$.}
    \end{subfigure}
    \\
    \begin{subfigure}{0.3\textwidth}
        \centering
        \begin{tikzpicture}[scale=0.60,transform shape]
            \Vertex[x=0,y=0,L=$3$]{x3}
            \Vertex[x=2,y=0,L=$4$]{x4}
            \Vertex[x=4,y=0,L=$5$]{x5}
            \Vertex[x=6,y=0,L=$1$]{x1}
            \Vertex[x=8,y=0,L=$2$]{x2}
            
            \tikzstyle{EdgeStyle}=[post,bend right]
            % \Edge[style={->}](x4)(x3)
            % \Edge[style={->}](x5)(x3)
            
            \Edge[style={->}](x1)(x4)
            \Edge[style={->}](x2)(x1)
            \Edge[style={->}](x4)(x3)
            % \Edge[style={->}](x2)(x5)
            % \Edge[style={->}](x3)(x5)
            % \Edge[style={->}](x4)(x5)
            \Edge[style={->}](x5)(x4)
        \end{tikzpicture}
        \caption{Permutation for $3$.}
    \end{subfigure}
    \hspace{8em} % Adjust the space between columns here
    \begin{subfigure}{0.3\textwidth}
        \centering
        \begin{tikzpicture}[scale=0.60,transform shape]
            \Vertex[x=0,y=0,L=$4$]{x4}
            \Vertex[x=2,y=0,L=$5$]{x5}
            \Vertex[x=4,y=0,L=$3$]{x3}
            \Vertex[x=6,y=0,L=$1$]{x1}
            \Vertex[x=8,y=0,L=$2$]{x2}
            
            \tikzstyle{EdgeStyle}=[post,bend right]
            \Edge[style={->}](x5)(x4)
            % \Edge[style={->}](x3)(x4)
            
            \Edge[style={->}](x1)(x4)
            \Edge[style={->}](x2)(x1)
            \Edge[style={->}](x3)(x4)
            % \Edge[style={->}](x2)(x5)
            % \Edge[style={->}](x3)(x5)
            % \Edge[style={->}](x4)(x5)
        \end{tikzpicture}
        \caption{Permutation for $4$.}
    \end{subfigure}
    \\
    \begin{subfigure}{0.3\textwidth}
        \centering
        \begin{tikzpicture}[scale=0.60,transform shape]
            \Vertex[x=0,y=0,L=$5$]{x5}
            \Vertex[x=2,y=0,L=$4$]{x4}
            \Vertex[x=4,y=0,L=$1$]{x1}
            \Vertex[x=6,y=0,L=$3$]{x3}
            \Vertex[x=8,y=0,L=$2$]{x2}
            
            \tikzstyle{EdgeStyle}=[post,bend right]
            \Edge[style={->}](x1)(x4)
            \Edge[style={->}](x2)(x1)
            \Edge[style={->}](x3)(x4)
            % \Edge[style={->}](x2)(x5)
            \Edge[style={->}](x3)(x1)
            \Edge[style={->}](x4)(x5)
        \end{tikzpicture}
        \caption{Permutation for $5$.}
    \end{subfigure}
    \caption{Distance-based permutations for the data set in Figure \ref{fig:navigable_graph_full}. As an example, in the plot above, we have $N_1(1) = 1, N_2(1) = 2, N_3(1) = 5, N_4(1) = 3,  N_5(1) = 4$, we have $N_1(2) = 2, N_2(2) = 1, N_3(2) = 3, N_4(2) = 5,  N_5(2) = 4$, etc. For the permutations, we draw all edges in the graph $G$ from \Cref{fig:navigable_graph_full} that point ``left'' in the permutation.  In particular, we show edges that connect any $N_\ell(t)$ to $N_k(t)$ with $k < \ell$. We can check that property \eqref{eq:permutation_property} holds, so the graph is navigable. %\Chris{Haya, I think some edges are missing. We shoudl have all the same edges in each plot correct? E.g. in (a) and (e) $x_4$ only has 1 out edge, but should have 3? Alternatively, we could only draw the left pointing edges, but then I would include all of them, and not include right pointing edges. }} \Cam{I think we should draw all the left pointing edges in the permutations.
    }
    \vspace{-.5em}
    \label{fig:permutation_view}
\end{figure}

\begin{proof}
    We give two different constructions that establish the theorem. The first is randomized, and succeeds with high probability. The second is deterministic. Both require $O(n^2(T + \log n))$ time to construct, where $T$ is the time to evaluate the distance function $D$ for any two input points.

    \vspace{2em}
    
    \noindent \textbf{Construction 1: Randomized.} Let $m$ be an integer between $1$ and $n$, to be chosen later. Our first construction is as follows:
    \begin{itemize}
    \vspace{-.5em}
        \item For all $i \in \{1, \ldots, n\}$ and all $1 < \ell\leq m$, add an edge from $N_\ell(i)$ to $N_1(i) = i$.%
        \item For all $i \in \{1, \ldots, n\}$, add an edge from $i$ to $\lceil \frac{3n \ln n}{m}\rceil$ nodes chosen uniformly at random from $\{1, \ldots, n\} \setminus \{i\}$.
            \vspace{-.5em}
    \end{itemize}
    To prove that this construction leads to a navigable graph, we  need to prove the \eqref{eq:permutation_property} holds with high probability. To do so, consider the permutation $N_1(i), \ldots, N_n(i)$ for a fixed node $i$. The property trivially holds for all $\ell \leq m$ since we connected $N_\ell(i)$ to $N_1(i)$ in step one of the construction. So, we only have to consider $\ell > m$. 
    
    For any $\ell > m$, the chance that any one random edge from the second step of the construction connects to some $N_k(i)$ for $k \leq m$ is $\frac{m}{n}$. So, the chance that \emph{none of the random edges} connect $N_\ell(i)$ to some $N_k(i)$ for $k \leq m$ is at most $\left(1 - \frac{m}{n}\right)^{\lceil\frac{3n \ln n}{m}\rceil} \leq \frac{1}{e^{3\ln n}} \leq \frac{1}{n^3}.$
    By a union bound, it follows that with probability at least  
    % \todo{Should this be $1-\frac{1}{n^2}$? Because we are using union bound for n nodes correct?}
    % chris: I think 1/n is correct, We're union bounding over all pairs i, l. There are O(n^2) of those.
    $1-\frac{1}{n}$, for all $i$ and all $\ell > m$, $N_\ell(i)$ has an edge to some $N_k(i)$ with $k \leq m < \ell$. Thus property  \eqref{eq:permutation_property} holds, and we conclude that the graph we constructed is navigable. 
    
    It is left to set $m$. The graph we constructed has at most $(m-1)n + n\cdot \lceil \frac{3n \ln n}{m}\rceil \leq mn + n\cdot \frac{3n \ln n}{m}$ edges. Balancing terms, if we choose $m = \sqrt{3n\ln n}$, the graph has $\leq 2\sqrt{3}n^{1.5}\sqrt{\ln n}$ edges.

    Finally, we observe that constructing the graph requires computing and sorting all $n$ distance-based permutations, which takes $O(n^2(T + \log n))$ time. Moreover, we can see that the constructed graph is small-world with parameter $C = 2$. In the first iteration of \Cref{alg:greedy_routing}, we are guaranteed to choose a node $h$ that is one of the $m$ closest neighbors of the input $\bar{x} = x_i$. Then, at the second iteration, $h$ has an edge to $i$ itself and so we terminate. 

\smallskip

\noindent\textbf{Construction 2: Deterministic.}  Our randomized construction  can be derandomized relatively directly. We construct the same $n(m-1)$ edges from $N_{\ell}(i)$ to $N_1(i)$ for all $i$ and $1 < \ell \leq m$. The goal in constructing random edges was to ensure that, for $\ell > m$, $N_{\ell}(i)$ always has an edge to some node in  $\{N_1(i),\ldots, N_m(i)\}$, which we will call $i$'s ``near-neighborhood'', and denote by $\mathcal{N}_m(i)$.  We can instead ensure that each $N_\ell(i)$ has an edge into $\mathcal{N}_m(i)$ via a greedy set cover approach.

    In particular, we claim that there is a set of $g \leq 1 + \frac{n \ln n}{m}$ nodes $k_1, \ldots, k_g$ such that every near-neighborhood $\mathcal{N}_m(i)$ contains at least one of $k_1, \ldots, k_g$. To ensure property \eqref{eq:permutation_property} for values of $\ell > m$, we only have to connect all nodes in our graph to this set.
    We construct this set greedily. %, following the standard approach for greedy set cover.
    %
   % We introduce some notation:
    Let $B_1 = \{1, \ldots, n\}$ and, for $i > 1$, let $B_i$ denote the set of all $i$ for which none of $k_1, \ldots, k_{i-1}$ is in $\mathcal{N}_m(i)$. We have that $|B_1| = n$  and our goal is to show that $|B_{g+1}| < 1$. By a counting argument, there must be at least one node that appears in $\mathcal{N}_m(i)$ for at least $m$ different values of $i \in B_1$. Select this node to be $k_1$, which ensures that:
    \begin{align*}
    %\vspace{-.5em}
        |B_2| \leq \left(1 - \frac{m}{n}\right)|B_1| = \left(1 - \frac{m}{n}\right)n.
    \end{align*}
    Again by a counting argument, there must be one node that appears in $\mathcal{N}_m(i)$ for at least $\frac{|B_2|\cdot m}{n}$ values of $i\in B_1$. Select this node to be $k_2$, which ensures that:
    \begin{align*}
        |B_3| \leq |B_2| - \frac{|B_2|\cdot m}{n} = \left(1 - \frac{m}{n}\right)|B_2| \leq \left(1 - \frac{m}{n}\right)^2 n.
    \end{align*}
    Continuing in this way, we conclude that $|B_{g+1}| \leq \left(1-\frac{m}{n}\right)^{g} n,$ 
    which is less than $1$ as long as $g > \frac{n \ln n}{m}$. We conclude that, as long as we connect every node to $k_1, \ldots, k_g$, \eqref{eq:permutation_property} is satisfied for all $N_{\ell}(i)$ where $\ell > m$, so our constructed graph is navigable.

    In total, our deterministic graph has at most $(m-1)n + n\cdot \left(\frac{n}{m}\ln n\ + 1\right)$ edges. Choosing $m = \sqrt{n\ln n}$ we get a graph with at most $2n^{1.5}\sqrt{\ln n}$ edges. Again, the cost of the algorithm is dominated by the $O(n^2(T + \log n))$ time required to compute and sort all $n$ distance-based permutations.
\end{proof}
We remark that, while it may be possible to improve our upper bound from $O(\sqrt{n \log n})$ to $O(\sqrt{n})$  average degree, we do not believe our analysis can be directly tightened. For random points in high-dimensional space under the Euclidean metric, we roughly expect each  near-neighborhood $\mathcal{N}_m(i)$ to look like a uniformly random subset of $\{1, \ldots, n\}$. If the neighborhoods were truly random, then existing results on random set cover problems imply that it is not possible to find $k_1, \ldots, k_g$ covering all near-neighborhoods unless $g = \Omega\left(\frac{n \ln n}{m}\right)$ \citep{Vercellis:1984,ArpinoDmitrievGrometto:2023}.

\vspace{-.15em}
\section{Lower Bounds}\label{sec:neg_res}
In this section, we prove  \Cref{thm:main_lower_bound}, which shows that even for a random point set in $\R^d$ for $d = O(\log n)$ under the Euclidean metric, \Cref{thm:main_upper_bound} cannot be improved significantly: with high probability, any navigable graph $G$ must have average degree $\Omega(n^{1/2 - \delta})$ for any fixed constant $\delta$. \Cref{thm:main_lower_bound} is a corollary of our more general \Cref{thm:lower_bound_detailed}, which we state below:
\begin{theorem}
    \label{thm:lower_bound_detailed}
    Let $x_1,\ldots, x_n \in \{-1,1\}^d$ be distributed independently and uniformly in $\{-1,1\}^d$. With probability $ > \frac{9}{10}$, any navigable graph for $x_1,\ldots,x_n$ under the Euclidean metric requires $\Omega(n^{3/2 - \epsilon})$ edges, for $\epsilon = \max\left(\frac{\log \log n}{\log n}, c \sqrt{\frac{\log n}{d}}\right)$ for a universal constant $c$.
\end{theorem}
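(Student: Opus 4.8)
The plan is to isolate a single combinatorial bottleneck from the navigability condition and then reduce the whole theorem to one quantitative fact about random $\pm1$ vectors. Fix $k_0=\lceil\sqrt n\rceil$ and, for each $t$, let $S_t=\{N_1(t),\dots,N_{k_0}(t)\}$ be the $k_0$ nearest neighbours of $x_t$ (so $t\in S_t$), and let $R_u=\{t: u\in S_t\}$ be the ``reverse $k_0$-near-neighbourhood'' of $u$; note $\sum_u|R_u|=\sum_t|S_t|=nk_0$ and $u\in R_u$. Property~\eqref{eq:permutation_property} says that in any navigable graph, for every $t$ and every $2\le\ell\le k_0$ the node $N_\ell(t)$ has an out-edge to a strictly closer node $N_k(t)\in S_t$. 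Reading this from $u$'s point of view: for each $u$ and each $t\in R_u\setminus\{u\}$, the out-neighbourhood $\mathcal N(u)$ must contain some $w\ne u$ with $w\in S_t$. Thus the sets $\{R_w: w\in\mathcal N(u)\setminus\{u\}\}$ cover $R_u\setminus\{u\}$, and since each $R_w$ meets $R_u$ in at most $M\eqdef\max_{u\ne w}|R_u\cap R_w|$ elements, $|\mathcal N(u)\setminus\{u\}|\ge(|R_u|-1)/M$. Summing over $u$,
\[
|E|\;\ge\;\sum_{u=1}^{n}\frac{|R_u|-1}{M}\;=\;\frac{n(k_0-1)}{M}\;=\;\Omega\!\Big(\frac{n^{3/2}}{M}\Big),
\]
so it will suffice to prove that $M\le n^{\epsilon}$ with probability $>\tfrac9{10}$.

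To bound $M$ I pass from ranks to a fixed radius. Euclidean and Hamming order agree on $\{-1,1\}^d$, and $d_H(x_t,x_j)\sim\mathrm{Bin}(d,1/2)$ independently over $j\ne t$. Let $r^\star$ be the least radius with $n\Pr[\mathrm{Bin}(d,1/2)\le r^\star]\ge 3k_0$ and set $\Delta^\star=\tfrac d2-r^\star$ (one gets $\Delta^\star=\Theta(\sqrt{d\log n})$). A Chernoff bound over the $n-1$ distances from each $x_t$ plus a union bound over $t$ shows that, except with probability $e^{-\Omega(\sqrt n)}$, every ball $B(x_t,r^\star)$ contains at least $k_0$ points, hence $S_t\subseteq B(x_t,r^\star)$ for all $t$; on this event $R_u\cap R_w\subseteq\{t: x_u,x_w\in B(x_t,r^\star)\}$. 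Also, by a union bound over pairs, with high probability all pairwise Hamming distances lie in $[\tfrac d2-c_1\sqrt{d\log n},\tfrac d2+c_1\sqrt{d\log n}]$. Then for fixed $u\ne w$, conditioning on $x_u,x_w$ with $\rho=d_H(x_u,x_w)$ in that window and looking at a fresh uniform $x_t$, splitting coordinates into those where $x_u,x_w$ agree and disagree shows $x_u,x_w\in B(x_t,r^\star)$ forces $a\le\tfrac{d-\rho}{2}-\Delta^\star$ for $a\sim\mathrm{Bin}(d-\rho,1/2)$ — that is, $a$ must fall below its mean by the \emph{same} absolute margin $\Delta^\star$ that defines $r^\star$. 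Hence $q_{u,w}\eqdef\Pr_{x_t}[x_u,x_w\in B(x_t,r^\star)]\le\Pr[\mathrm{Bin}(d-\rho,1/2)\le\tfrac{d-\rho}{2}-\Delta^\star]$, and since $|\{t:x_u,x_w\in B(x_t,r^\star)\}|$ is dominated by $\mathrm{Bin}(n,q_{u,w})$ given $x_u,x_w$, a Chernoff bound and a union bound over the $\binom n2$ pairs give $M\le O(n\max_{u\ne w}q_{u,w}+\log n)$ with high probability.

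It then remains to show $n\,q_{u,w}\le n^{\epsilon}$, i.e.\ $q_{u,w}\le n^{-1+\epsilon}$ — the crux. By the choice of $r^\star$, $\Pr[\mathrm{Bin}(d,1/2)\le\tfrac d2-\Delta^\star]=\Theta(k_0/n)=\Theta(n^{-1/2})$, while $q_{u,w}$ is a tail of $\mathrm{Bin}(d-\rho,1/2)$ with $d-\rho=\tfrac d2(1+o(1))$ at the \emph{same} margin $\Delta^\star$; a margin $\Delta^\star$ below the mean of $\mathrm{Bin}(d/2,1/2)$ is, up to lower-order terms, twice as expensive in the exponent as the same margin for $\mathrm{Bin}(d,1/2)$, so $q_{u,w}$ should be about $(\Theta(n^{-1/2}))^2=\Theta(n^{-1})$ — the ``two near-neighbours of a random point are nearly independent'' heuristic. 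Making this rigorous is exactly where the sharp binomial anti-concentration bounds of~\citep{Ahle:2022,Cramer:2022} enter: one needs a lower bound on the $\Theta(n^{-1/2})$ tail to pin down $(\Delta^\star)^2/d=\tfrac14\ln n-\tfrac14\ln\log n+O(1)$ (here the local-CLT prefactor $\Theta(d^{-1/2})$ cancels against the $\Theta(d/\Delta^\star)$ factor from summing the tail, leaving only a $\log\log n$ loss), together with the Hoeffding upper bound $q_{u,w}\le e^{-2(\Delta^\star)^2/(d-\rho)}$ and control of the non-Gaussian $\Theta(\Delta^4/d^3)$ correction. Tracking these yields $q_{u,w}\le n^{-1}\cdot n^{O(\log\log n/\log n)}\cdot n^{O(\sqrt{\log n/d})}=n^{-1+\epsilon}$ for $\epsilon=\max(\log\log n/\log n,\,c\sqrt{\log n/d})$; combining with the reduction above (and noting all points are distinct with high probability since $d\ge c\log n$) gives $|E|=\Omega(n^{3/2-\epsilon})$ with probability $>\tfrac9{10}$.

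I expect this last step — the estimate $q_{u,w}\le n^{-1+\epsilon}$ — to be the only real difficulty. Everything upstream (the covering reduction, the rank-to-radius passage, the Chernoff and union bounds) is routine; but a Hoeffding-only analysis would give merely $q_{u,w}\le n^{-1+\Omega(1)}$ and hence the far weaker $|E|=\Omega(n^{1/2+\Omega(1)})$. Squeezing the exponent down to $-1+o(1)$, with the precise $o(1)$ that produces the two terms of $\epsilon$, is what forces the sharp moderate-deviation asymptotics for the binomial, and is the technical heart of the theorem.
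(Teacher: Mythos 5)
Your proposal is correct and takes essentially the same route as the paper: force $\Omega(\sqrt n)$ edges within each of $n$ near-neighborhoods of size $\Theta(\sqrt n)$, then argue via sharp binomial anti-concentration (Ahle/Cram\'er) that pairwise neighborhood intersections are of size $n^{\epsilon}$ so each edge is reused only $n^\epsilon$ times, giving $|E|=\Omega(n^{3/2-\epsilon})$. The only cosmetic differences are that the paper works directly with the fixed-radius sets $\mathcal O_j=\{i:\langle x_i,x_j\rangle\ge c_h\sqrt{d\log n}\}$, which are symmetric ($u\in\mathcal O_j\iff j\in\mathcal O_u$) and thus play the role of both your $S_t$ and your $R_u$ at once — avoiding your rank-to-radius passage — and that the paper phrases the count as a total edge-weight inequality $\sum_{(u,v)\in E}|\mathcal O_u\cap\mathcal O_v|\ge\sum_j(|\mathcal O_j|-1)$ rather than your per-vertex bound $|\mathcal N(u)|\ge(|R_u|-1)/M$; the technical heart (your $q_{u,w}\le n^{-1+\epsilon}$ step) is exactly the paper's Claims 5, 6, and 8 using Fact~\ref{thm:ahle}.
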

The $\Omega(n^{3/2-\delta})$ lower bound of \Cref{thm:main_lower_bound} follows immediately from \Cref{thm:lower_bound_detailed} by taking $d = \frac{c^2}{\delta^2}\log n$. Alternatively, if we take $d = c^2\log^3 n$ we have a lower bound of $\Omega(n^{3/2}/\log n)$, which matches \Cref{thm:main_upper_bound} up to a $O(\log^{3/2} n)$ factor.

\subsection{Average Degree Lower Bound}
We introduce a few intermediate results before giving the proof of \Cref{thm:lower_bound_detailed}.
%
% First, we describe our simple hard distribution, which we will prove leads to a point set that does not admit sparse-navigable graphs with high-probability. 
%
%\begin{definition}[Hard Distribution]\label{def:hard}
 %Let $x_1,\ldots, x_n \in \{-1,1\}^d$ be distributed independently and uniformly in $\{-1,1\}^d$.
%\end{definition}
%
First, we observe that the point set of \Cref{thm:lower_bound_detailed} does not have any duplicate points with high probability. Doing so simplifies our analysis as no ``tie-breaking'' will be needed when \Cref{alg:greedy_routing} is run on an input $x_j$. 
\begin{claim}[No Repeated Points]\label{clm:no_repeats}
    Let $x_1,\ldots, x_n$ be distributed as in \Cref{thm:lower_bound_detailed}. As long as $d \ge c_1 \log n$ for a universal constant $c_1$, then with probability at least $99/100$, all vectors in this set are distinct.
\end{claim}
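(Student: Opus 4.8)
The plan is to bound the probability that some pair of points coincides via a union bound over the $\binom{n}{2}$ pairs, combined with an exact computation of the probability that a fixed pair of points is equal. For any fixed $i \neq j$, the coordinates of $x_i$ and $x_j$ are independent, so $\Pr[x_i = x_j] = \prod_{k=1}^d \Pr[(x_i)_k = (x_j)_k] = (1/2)^d = 2^{-d}$. Applying a union bound over all pairs gives $\Pr[\exists\, i \neq j : x_i = x_j] \leq \binom{n}{2} 2^{-d} \leq n^2 2^{-d}$.

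It then suffices to choose the constant $c_1$ so that $n^2 2^{-d} \leq 1/100$ whenever $d \geq c_1 \log n$. Since $\log n$ here denotes the base-$2$ logarithm, $d \geq c_1 \log n$ gives $2^{-d} \leq n^{-c_1}$, so $n^2 2^{-d} \leq n^{2 - c_1}$. Taking $c_1 = 3$ (say) makes this at most $n^{-1} \leq 1/100$ for all $n \geq 100$; for the finitely many remaining values of $n$ one can either absorb them into the constant or simply take $c_1$ slightly larger so the bound holds for all $n \geq 2$. This yields the claimed probability at least $99/100$ that all vectors are distinct.

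There is essentially no obstacle here — the only thing to be careful about is bookkeeping the base of the logarithm (base $2$ versus base $e$) so that the constant $c_1$ is stated correctly, and making sure the inequality $n^2 2^{-d} \le 1/100$ is verified across the full range of $n$, not just asymptotically. Since the paper's convention (stated in \Cref{sec:prelims}) is that $\log n = \log_2 n$, the computation above is clean. A one- or two-line proof is all that is needed.
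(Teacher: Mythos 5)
Your proposal is correct and follows essentially the same argument as the paper: compute $\Pr[x_i = x_j] = 2^{-d} \le n^{-c_1}$ for a fixed pair and then union bound over all pairs. You are just slightly more explicit about the logarithm base and the treatment of small $n$, which the paper absorbs into the phrase ``for sufficiently large $c_1$.''
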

The claim follows from a simple probability calculation and union bound -- see Appendix \ref{app:appendix_simple_lower}.
%\begin{proof}
%    Consider two points $x_i,x_j$. The probability that these points are identical equals $\frac{1}{2}^d \leq \frac{1}{n^{c_1}}$. Then by a union bound over all pairs $i,j$, we have that all points are distinct with probability at least $1 - \frac{1}{n^{c_1-2}}$, which is greater than $99/100$ for sufficiently large $c_1$. 
%\end{proof}

We next define a notion of a neighborhood of a point, which includes all points within a certain radius.

\begin{definition}[Fixed Radius Near-Neighborhood]\label{def:neighbor} Consider the setting of \Cref{thm:lower_bound_detailed}. Let $\mathcal{O}_j$ be the subset of vectors in $x_1,\ldots,x_n$ (including $x_j$ itself) with $\langle x_i, x_j \rangle \ge c_h \sqrt{d \log n}$ where $c_h \in [1/3,1]$ is some value (that may depend on $n$) which we will specify later.
%TS: \ge or \le? This seems wrong.
\end{definition}
Note that, for any $x_i, x_j \in \{-1,1\}^d$, $\|x_i - x_j\|_2^2 = d - 2\langle x_i, x_j \rangle$, so $\mathcal{O}_j$ contains a set of nearest neighbors to 
%TS: x_j or x_d?
$j$ in the Euclidean distance. Importantly, however, the definition used in this section is different from the $\mathcal{N}_m(j)$ notation used in the previous section, since $\mathcal{O}_j$ is not of a fixed size. 

Using the distance-based permutation characterization of navigability given in \eqref{eq:permutation_property}, we can observe that, in order for greedy routing to make progress towards target $x_j$, every node in  the near-neighborhood $\mathcal O_j$ needs an edge to another node in the near-neighborhood, closer to $x_j$. Formally: 
\begin{claim}[Required Connections Within Neighborhoods]\label{clm:basic}
Consider the setting of \Cref{thm:lower_bound_detailed} and assume that $x_1,\ldots,x_n \in \{-1,1\}^d$ are distinct. 
Any navigable graph $G$ for $x_1,\ldots,x_n$ requires $|\mathcal{O}_j|-1$ edges in $\mathcal{O}_j \times \mathcal{O}_j$ for each $j$.
\end{claim}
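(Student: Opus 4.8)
The plan is to argue that within the near-neighborhood $\mathcal{O}_j$, the subgraph induced on $\mathcal{O}_j$ must be ``weakly connected toward $x_j$'' in the sense dictated by property \eqref{eq:permutation_property}, and that any such structure needs at least $|\mathcal{O}_j|-1$ edges.

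First I would invoke \eqref{eq:permutation_property} directly with target $t = j$. By Definition~\ref{def:neighbor}, every $x_i \in \mathcal{O}_j$ satisfies $\langle x_i, x_j\rangle \ge c_h\sqrt{d\log n}$, hence $\|x_i - x_j\|_2^2 = d - 2\langle x_i, x_j\rangle \le d - 2c_h\sqrt{d\log n}$. Meanwhile any $x_p \notin \mathcal{O}_j$ has $\langle x_p, x_j\rangle < c_h\sqrt{d\log n}$, so $\|x_p - x_j\|_2^2 > d - 2c_h\sqrt{d\log n}$. Thus, in the distance-based permutation $N_1(j), N_2(j), \ldots$, the nodes of $\mathcal{O}_j$ occupy exactly an initial segment: if we let $q = |\mathcal{O}_j|$, then $\{N_1(j), \ldots, N_q(j)\} = \mathcal{O}_j$ (using that distinctness rules out ties across the boundary, or handling the boundary tie via the fixed tie-breaking rule; since the $x_i$ are distinct with high probability by \Cref{clm:no_repeats}, strict inequalities on distances can be arranged, but in any case the tie-breaking order is consistent so the claim is unaffected).

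Now fix any $\ell$ with $1 < \ell \le q$. Property \eqref{eq:permutation_property} guarantees an edge from $N_\ell(j)$ to some $N_k(j)$ with $k < \ell$; since $k < \ell \le q$, the node $N_k(j)$ also lies in $\mathcal{O}_j$. Hence for each of the $q-1$ vertices $N_2(j), \ldots, N_q(j)$ there is a distinct out-edge, and both endpoints of that edge lie in $\mathcal{O}_j$, i.e. the edge belongs to $\mathcal{O}_j \times \mathcal{O}_j$. These $q - 1$ edges are distinct because they have distinct tails $N_\ell(j)$. Therefore any navigable graph has at least $q - 1 = |\mathcal{O}_j| - 1$ edges inside $\mathcal{O}_j \times \mathcal{O}_j$, which is exactly the claim.

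There is no real obstacle here; the only subtlety is confirming that $\mathcal{O}_j$ really forms a prefix of the permutation $N_\cdot(j)$, which requires the radius threshold in Definition~\ref{def:neighbor} to be a clean cutoff in the inner product (it is, since $\|x_i-x_j\|_2^2$ is a strictly decreasing affine function of $\langle x_i, x_j\rangle$) together with the distinctness of the points from \Cref{clm:no_repeats} so that the greedy algorithm's behavior matches \eqref{eq:permutation_property}. Once that observation is in place, the counting of $|\mathcal{O}_j|-1$ forced edges is immediate from \eqref{eq:permutation_property}.
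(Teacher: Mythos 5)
Your proof is correct and follows essentially the same reasoning as the paper's, which observes directly from navigability that each $i \in \mathcal{O}_j \setminus \{j\}$ must have an out-edge to a strictly closer node (in the tie-broken ordering), which by the radius-cutoff definition of $\mathcal{O}_j$ lies in $\mathcal{O}_j$ as well. You add useful detail — that $\mathcal{O}_j$ is a prefix of the distance-based permutation and that the forced edges are distinct because their tails are distinct — but the core counting argument is identical.
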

\begin{proof}
%This follows from the distance-based permutation characterization of navigability given in \eqref{eq:permutation_property}. 
For $G$ to be navigable,
%In order for \Cref{alg:greedy_routing} to reach node $j$ starting from any $i \in \mathcal{O}_j$ (except for $j$ itself) 
for any $i \in \mathcal O_j\setminus \{j\}$, there must be an edge from $i$ to some node that is closer $j$, and thus is also in $\mathcal{O}_j$. Thus, there are at least $|\mathcal{O}_j|-1$ edges in $\mathcal{O}_j \times \mathcal{O}_j$.
\end{proof}

We next  make two claims about the near neighborhoods of Definition \ref{def:neighbor} when $x_1,\ldots,x_n$ are random points in $\{-1,1\}^d$: that 1) they are large with high probability and 2) that they have low overlap with high probability. Together with Claim \ref{clm:basic}, these imply that any navigable graph $G$ for $x_1,\ldots,x_n$ requires a large number of edges, proving  \Cref{thm:lower_bound_detailed}. We first formally state the claims and prove \Cref{thm:lower_bound_detailed} using them. We then prove the claims in \Cref{sec:prob_claims}. 

\begin{claim}[Neighborhoods are Large]\label{clm:large}
    Let $x_1,\ldots, x_n$ be as distributed as in   \Cref{thm:lower_bound_detailed} and let $\mathcal{O}_j$ be as in Definition \ref{def:neighbor}.
    As long as $d \ge c_1 \log n$ for a universal constant $c_1$, with probability at least $99/100$, for all $j$, $|\mathcal{O}_j| \ge \sqrt{n}/6$. 
%Thus, $\sum_{j=1}^n |\mathcal{O}_j| \ge n^{3/2}/2$. 
\end{claim}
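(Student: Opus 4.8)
The plan is to fix a single index $j$, lower-bound $\Pr[|\mathcal{O}_j| \geq \sqrt{n}/6]$, and then union bound over all $j \in \{1,\ldots,n\}$. By definition, $|\mathcal{O}_j| = \sum_{i=1}^n \mathbbm{1}[\langle x_i, x_j\rangle \geq c_h\sqrt{d\log n}]$, and conditioned on $x_j$, the summands for $i \neq j$ are i.i.d.\ indicator variables (the $i=j$ term always contributes $1$). So the first task is to get a sharp \emph{lower} bound on $p \eqdef \Pr[\langle x_i, x_j\rangle \geq c_h\sqrt{d\log n}]$ for a fixed $\pm 1$ vector $x_j$. Writing $\langle x_i, x_j\rangle = \sum_{k=1}^d \sigma_k$ where $\sigma_k = (x_i)_k (x_j)_k$ are i.i.d.\ Rademacher, this is exactly a binomial tail: $\Pr[\mathrm{Bin}(d,1/2) \geq (d + c_h\sqrt{d\log n})/2]$. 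The key input is a \emph{sharp anti-concentration} (lower) bound for the binomial tail — precisely the tool cited in the excerpt (\citep{Ahle:2022,Cramer:2022}) — which gives $p \geq n^{-c_h^2/2 - o(1)}$, up to polylogarithmic corrections, when the deviation is $\Theta(\sqrt{d\log n})$. Since $c_h \in [1/3,1]$, for an appropriate choice of $c_h$ (the "value we will specify later") one gets $p \geq n^{-1/2}\cdot n^{o(1)}$, so that the expected size $\mathbb{E}[|\mathcal{O}_j|] = 1 + (n-1)p \gtrsim \sqrt{n}$.

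Given the lower bound on $p$, the second task is concentration: $|\mathcal{O}_j| - 1$ is a sum of $n-1$ independent Bernoulli($p$) variables with mean $\mu = (n-1)p \gtrsim \sqrt{n}$, and I want to show it is at least, say, $\sqrt{n}/6$ with probability $\geq 1 - 1/(100n)$. A multiplicative Chernoff lower tail $\Pr[X \leq (1-\eta)\mu] \leq \exp(-\eta^2\mu/2)$ suffices: since $\mu = \Omega(\sqrt{n})$ grows polynomially in $n$, taking $\eta$ a constant (chosen so that $(1-\eta)\mu \geq \sqrt{n}/6$, which is possible once the constant in $\mu \gtrsim \sqrt{n}$ is pinned down — this is where the precise choice of $c_h$ and the $n^{o(1)}$ factors must be reconciled) yields failure probability $\exp(-\Omega(\sqrt{n}))$, which is $o(1/n)$. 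Then a union bound over the $n$ choices of $j$ gives the claimed $99/100$ success probability, using $d \geq c_1\log n$ to absorb the constants.

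The main obstacle is the first step: extracting a clean, correctly-normalized lower bound on the binomial tail $\Pr[\mathrm{Bin}(d,1/2) \geq (d+c_h\sqrt{d\log n})/2]$ that is tight enough to conclude $p \geq n^{-1/2+o(1)}$. Standard Chernoff/Hoeffding bounds only give \emph{upper} bounds on this tail and are useless here; one genuinely needs anti-concentration, and the delicate point is that the deviation $c_h\sqrt{d\log n}$ is only a $\Theta(\sqrt{(\log n)/d})$ fraction of the standard deviation $\sqrt{d}/2$ — a moderate-deviations regime where the Gaussian approximation $e^{-t^2/2}$ for the tail at $t$ standard deviations is valid but requires care (Stirling, or a precise local CLT / Cramér-type estimate) to justify with explicit constants. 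This is precisely why the choice of $c_h$ is deferred: it must be tuned against the $\epsilon = \max(\log\log n/\log n,\, c\sqrt{\log n/d})$ slack in Theorem~\ref{thm:lower_bound_detailed} so that the $n^{o(1)}$ losses in the anti-concentration bound, the Chernoff step, and the later overlap bound (Claim on low overlap) all fit together. Everything after the tail estimate — the Chernoff concentration and the union bound — is routine.
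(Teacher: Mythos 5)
Your proposal follows essentially the same route as the paper: fix $j$, express the event $i\in\mathcal{O}_j$ as a binomial-tail event, invoke the sharp Cram\'er/Ahle anti-concentration \emph{lower} bound (Fact~\ref{thm:ahle}) to get $\Pr[i\in\mathcal{O}_j]\gtrsim 1/\sqrt{n}$ after tuning $c_h$, then apply a multiplicative Chernoff lower-tail bound (giving failure probability $e^{-\Omega(\sqrt{n})}$) and union bound over $j$. The one place where the paper is tighter than your sketch is the handling of the ``$n^{o(1)}$ reconciliation'' you flag: rather than leaving $p\ge n^{-1/2}\cdot n^{o(1)}$ (which, if the fudge factor fell below $1$, would not quite yield $\E|\mathcal{O}_j|\ge c\sqrt{n}$ for a fixed constant), the paper selects $c_h\in[1/3,1]$ via an intermediate-value argument (Claim~\ref{clm:exact}) so that the entire quantity $\frac{1}{\sqrt{\ln n}}\exp\bigl(-\tfrac{c_h^2}{2}\ln n\,(1+\tfrac{c_2 c_h^2 \ln n}{d})\bigr)$ equals exactly $1/\sqrt{n}$, which absorbs the $(1+O(\log n/d))$ correction from Fact~\ref{thm:ahle} and the $1/\sqrt{\ln n}$ prefactor simultaneously and leaves a clean constant-factor bound $p\ge\frac{1}{3\sqrt{n}}$, hence $\E|\mathcal{O}_j|\ge\sqrt{n}/3$. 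You correctly identify this tuning as the delicate step, and your Chernoff and union-bound steps match the paper's exactly.
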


\begin{claim}[Neighborhood Intersections are Small]\label{clm:small}
    Let $x_1,\ldots, x_n$ be distributed as in  \Cref{thm:lower_bound_detailed} and let $\mathcal{O}_j$ be as in Definition \ref{def:neighbor}.
    As long as $d \ge c_1 \log n$ for a universal constant $c_1$, with probability at least $99/100$, for all $i \neq j$, $|\mathcal{O}_i \cap \mathcal{O}_j| \le 10\max\left(\log n, n^{c \sqrt{\frac{\log n}{d}}}\right)$ for some universal constant $c$. 
\end{claim}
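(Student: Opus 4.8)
The plan is to fix a pair $i \ne j$, condition on the pair $(x_i,x_j)$, and observe that over the randomness of the remaining points $x_k$ ($k \ne i,j$) the quantity $|\mathcal{O}_i \cap \mathcal{O}_j|$ is a sum of $n-2$ i.i.d.\ indicators $\mathbf 1[\langle x_k,x_i\rangle \ge t \text{ and } \langle x_k,x_j\rangle \ge t]$, where $t = c_h\sqrt{d\log n}$, plus a deterministic term of size at most $2$ accounting for whether $x_i \in \mathcal O_j$ and $x_j \in \mathcal O_i$. Conditioned on $(x_i,x_j)$ this is a Binomial$(n-2,p)$ variable with $p = p(a)$ depending on $x_i,x_j$ only through $a = \langle x_i,x_j\rangle$, and with $p(a)$ nondecreasing in $a$. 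Given good upper bounds on $|a|$ and on $p(a)$, a multiplicative Chernoff bound controls $|\mathcal O_i \cap \mathcal O_j|$ around its conditional mean $(n-2)p(a)+O(1)$, and a union bound over the $\binom{n}{2}$ pairs finishes the claim. Using Chernoff rather than Markov is exactly what makes a union bound over $\Theta(n^2)$ pairs affordable, and it is available precisely because of the conditional independence across the $x_k$.

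Two quantitative inputs are needed. First, control $a$: by Hoeffding and a union bound over pairs, with probability at least $99/100$ every pair satisfies $|a| \le C_0\sqrt{d\log n}$ for a suitable constant $C_0$; condition on this event. Second, and this is the heart of the argument, bound $p(a)$. The key is to keep \emph{both} constraints: partition the $d$ coordinates into those where $x_i$ and $x_j$ agree ($s = (d+a)/2$ of them) and those where they disagree; then $\langle x_k,x_i\rangle + \langle x_k,x_j\rangle = 2U$ where $U$ is a sum of $s$ independent Rademacher signs, so the joint event forces $U \ge t$ and hence $p(a) \le \Pr[U \ge t]$. Crucially $U$ is a sum of only $s \approx d/2$ signs, not $d$, so a sharp tail bound for $U$ — of the form $\Pr[U \ge t] \le \frac{O(1)}{t/\sqrt s}\, e^{-t^2/(2s)}$, exactly the sort of sharp binomial estimate of \citep{Ahle:2022,Cramer:2022} — gives an exponent roughly twice as large as a one-sided bound on $\langle x_k,x_i\rangle$ alone would, plus a $1/\sqrt{\log n}$ prefactor (here $t/\sqrt s = \Theta(\sqrt{\log n})$). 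Keeping only one constraint would give the useless estimate $p(a) \lesssim n^{-1/2}$.

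Plugging in $|a| \le C_0\sqrt{d\log n}$ gives $\frac{t^2}{2s} = \frac{c_h^2 d\log n}{d+a} \ge \frac{t^2}{d}\bigl(1 - C_0\sqrt{\log n/d}\bigr)$, so $p(a) \le \frac{O(1)}{\sqrt{\log n}}\, e^{-t^2/d}\cdot n^{O(\sqrt{\log n/d})}$; the last factor absorbs the $\Theta(\sqrt{d\log n})$ fluctuation of $a$ and is the source of the $n^{c\sqrt{\log n/d}}$ term in the statement. Taking $c_h$ as large as Claim~\ref{clm:large} permits makes $n\,e^{-t^2/d} = \Theta(\log n)$ (equivalently $n^{1-\Theta(\log\log n/\log n)}$, which is where the $\log\log n/\log n$ in the theorem's exponent originates), so the conditional mean is $(n-2)p(a) + O(1) = O\bigl(\max(\log n,\, n^{c\sqrt{\log n/d}})\bigr)$. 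A Chernoff bound — the standard multiplicative form when this mean is $\Omega(\log n)$, and the Poisson-tail form $\Pr[X \ge k] \le (e\mu/k)^k$ when it is smaller, so that $k = \Theta(\log n)$ suffices — then gives $\Pr[\,|\mathcal O_i \cap \mathcal O_j| > 10\max(\log n, n^{c\sqrt{\log n/d}})\,] \le n^{-3}$ for appropriate constants. A union bound over all pairs, together with the $a$-control event, yields the claim with probability at least $99/100$.

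The main obstacle is the bound on $p(a)$: getting it down to $n^{-1+o(1)}$ with the correct $o(1)$. This needs (i) using both constraints via the agreement/disagreement decomposition, which recovers the ``product of two tails'' behavior and the factor-of-two saving in the exponent — without it the bound is polynomially too weak; (ii) a sufficiently sharp binomial tail estimate to shave the polylogarithmic prefactors, for which the sharp anti-concentration/concentration machinery of \citep{Ahle:2022,Cramer:2022} is the right tool; and (iii) calibrating $c_h$ to its largest value compatible with Claim~\ref{clm:large}, so that neighborhoods stay large while $n\cdot p(a)$ stays polylogarithmic in the large-$d$ regime. Everything else — the Hoeffding bound on $a$, the Chernoff concentration of the conditional Binomial, and the union bounds — is routine.
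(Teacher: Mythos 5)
Your proposal is correct and follows essentially the same route as the paper's proof: condition on $(x_i,x_j)$, control $\langle x_i,x_j\rangle$ (equivalently the agreement count $z$) by Hoeffding plus a union bound, reduce the joint membership event to a single binomial tail over the $s=(d+a)/2$ agreement coordinates, apply the sharp Ahle/Cram\'er CDF estimate with $c_h$ calibrated via Claim~\ref{clm:exact}, and finish with a conditional Chernoff bound and a union bound over pairs. The only cosmetic difference is that you derive $p(a)\le\Pr[U\ge t]$ cleanly from $\langle x_k,x_i\rangle+\langle x_k,x_j\rangle=2U$, whereas the paper phrases the same reduction as a ``best case'' argument on the disagreement positions; the resulting binomial tail is identical.
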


%\begin{theorem}[General Lower Bound]
%    \label{thm:lower_bound_detailed}
%    Let $x_1,\ldots, x_n \in \{-1,1\}^d$ be as in Definition \ref{def:hard}. With probability $ > 9/10$, any navigable graph for $x_1,\ldots,x_n$ under the Euclidean metric requires $\Omega(n^{3/2 - \epsilon})$ edges, for $\epsilon = \max\left(\frac{\log \log n}{\log n}, c \sqrt{\frac{\log n}{d}}\right)$ for some universal constant $c$.
%\end{theorem}
%The $\Omega(n^{3/2-\delta})$ lower bound of \Cref{thm:main_lower_bound} follows immediately from \Cref{thm:lower_bound_detailed} by taking $d = \frac{c^2}{\delta^2}\log n$. Alternatively, if we take $d = c^2\log^3 n)$ we have a lower bound of $\Omega(n^{3/2}/\log n)$.
Claim \ref{clm:large} establishes that $\mathcal{O}_j$ contains the $\Theta(\sqrt{n})$ nearest neighbors to $x_j$, which is a consequence of our choice  of radius in Def. \ref{def:neighbor}. If each $\mathcal O_j$ were just an independent random set of $\Theta(\sqrt{n})$ nodes, then we would expect that for any $i \neq j$, $|\mathcal O_i \cap \mathcal O_j| = \Theta(1)$. I.e., our neighborhoods would have small overlap, which is the key property we need to prove a lower bound. An overlap of $O(1)$ would imply that $O(n^{3/2})$ edges are needed to satisfy the requirements of Claim \ref{clm:basic}. Of course, each $\mathcal O_j$ is not an independent random set in reality. Specifically, if $x_i$ and $x_j$ have large inner product, $\mathcal O_i$ and $\mathcal O_j$ will be correlated, so we expect that $|\mathcal O_i \cap \mathcal O_j|$ will be larger. Claim \ref{clm:small} shows that for large enough $d$, such strong correlations are unlikely to happen and thus we still expect $|\mathcal O_i \cap \mathcal O_j|$ to be fairly small.

\begin{proof}[\textbf{Proof of \Cref{thm:lower_bound_detailed}}]
First note that we can assume that $d \ge c_1 \log n$ for some large constant $c_1$, as otherwise we will have $\epsilon > 3/2$ and the lower bound becomes vacuous. Accordingly, the conclusions of Claims \ref{clm:no_repeats}, \ref{clm:large} and \ref{clm:small} all hold  for $x_1, \ldots, x_n$ with probability $>9/10$ by a union bound.

We will use these claims to show that any navigable graph for $x_1, \ldots, x_n$ requires $\Omega(n^{3/2-\epsilon})$ edges. Consider any navigable graph $G = (V,E)$. For any edge $(u,v) \in E$, let $w_{u,v} = |\{j: u,v \in \mathcal{O}_j\}|$ be the number of near neighborhoods that $u$ and $v$ both belong to. I.e., $w_{u,v}$ is the number of nodes $j$ for which $(u, v) \in \mathcal{O}_j \times \mathcal{O}_j$. By Claims \ref{clm:basic} and \ref{clm:large}, we must have 
\begin{align}
\sum_{(u,v) \in E} w_{u,v} \ge \sum_{j=1}^n |\mathcal{O}_j| -1 \ge \frac{n^{3/2}}{6}-n \ge \frac{n^{3/2}}{12},\label{eq:edgeBound}
\end{align}
where we use in the last step that $n \le n^{3/2}/12$ for large enough $n$.

Further, $w_{u,v} = |\mathcal{O}_u \cap \mathcal{O}_v|$ since $u$ and $v$ both lie in $\mathcal{O}_j$ exactly when $j$ lies in both $\mathcal \mathcal{O}_u$ and $\mathcal{O}_v$. Thus, by Claim \ref{clm:small}, $w_{u,v} \le 10\max(\log n, n^{c \sqrt{\log n/d}})$ for all $u,v$. Combined with \eqref{eq:edgeBound} this gives: 
$$|E| \ge \frac{n^{3/2}/12}{10\max(\log n, n^{c \sqrt{\log n/d}})} = \Omega(n^{3/2- \epsilon}),$$
for $\epsilon = \max(\frac{\log \log n}{\log n}, c \sqrt{\log n/d})$. This
proves the theorem.
\end{proof}

\subsection{Probabilistic Claims about Near Neighborhoods}\label{sec:prob_claims}

% \input{probabilistic claims about near neighborhoods}

% \section{Probabilistic Claims about Near Neighborhoods}

We now prove Claims \ref{clm:large} and \ref{clm:small}.
 We will use a very sharp bound on the CDF of a binomial distribution, given by \cite{Ahle:2022} and attributed to Cramer \citep{Cramer:2022}. This is a quantitative version of the central limit theorem, saying that the binomial CDF is close to the normal CDF, up to some small error. It is tighter than more general bounds like the Berry-Esseen theorem. We give a proof of our exact statement of the bound in Appendix \ref{app:appendix_simple_lower}.

 % \Cam{Appendcize the proof of this.}
% \begin{theorem}[Binomial CDF Bound -- 2.22 of Ahl Note]\label{thm:ahl}
% Let $F_t(\cdot)$ be the CDF of the mean centered binomial random variable with $t$ trials and success probability $1/2$. 
% $$ \frac{1}{\sqrt{2\pi} x} e^{-x^2/2} \cdot \left (1 - C \left ( \frac{1}{x^2} + \frac{x^4}{t} \right) \right ) \le F_t(-x \cdot \sqrt{t}/2) \le \frac{1}{\sqrt{2\pi} x} e^{-x^2/2} \cdot \left (1 + C \left ( \frac{1}{x^2} + \frac{x^4}{t} \right) \right ).$$
% where $C$ is a universal constant and $1 \ll x \ll t^{1/4}$. 
% \end{theorem}
\begin{fact}[Binomial CDF Bound, 2.20 of \citep{Ahle:2022}]\label{thm:ahle}
Let $F_t(\cdot)$ be the CDF of a mean centered binomial random variable  with $t$ trials and success probability $1/2$. There are universal constants $c_1,c_2$ such that, for any $x$ satisfying $c_1 \le x \le \sqrt{t}/c_1$,
$$ \frac{1}{3x} e^{-\frac{x^2}{2} \cdot \left (1+\frac{c_2x^2}{t}\right )}\le F_t(-x \cdot \sqrt{t}/2) \le \frac{1}{x} e^{-\frac{x^2}{2} \cdot \left (1-\frac{c_2x^2}{t}\right )}.$$
\end{fact}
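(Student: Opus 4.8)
The plan is to prove \Cref{thm:ahle} ``by hand'' from Stirling's formula rather than from a general quantitative CLT, which is what lets us pin down the explicit leading constants ($\tfrac13$ and $1$). Since $F_t$ is the CDF of $\mathrm{Bin}(t,\tfrac12)$ centered to mean $0$, and $\mathrm{Bin}(t,\tfrac12)$ has standard deviation $\sqrt t/2$, the quantity $F_t(-x\sqrt t/2)$ is exactly the lower-tail probability at $x$ standard deviations. Assuming $t$ even for cleanliness (the odd case differs only by a half-integer bookkeeping adjustment), set $K := \lfloor t/2 - x\sqrt t/2\rfloor$ and $m := t/2 - K = \lceil x\sqrt t/2\rceil \in [x\sqrt t/2,\, x\sqrt t/2 + 1)$, so that $F_t(-x\sqrt t/2) = 2^{-t}\sum_{k=0}^{K}\binom{t}{k}$. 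The hypothesis $c_1 \le x \le \sqrt t/c_1$ forces $t \ge c_1^2$, so by taking $c_1$ a large absolute constant we may assume $t$ exceeds any fixed constant; it also forces $1 \le m \le t/(2c_1) + 1$, so the sum really is a lower tail strictly below the mean with $K$ well to the left of $t/2$, and the ``window width'' $w := \lceil t/(2m)\rceil \asymp \sqrt t/x$ used below is large (at least $\sim c_1$) but short compared to $\sqrt t$.

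First I would estimate the boundary coefficient $2^{-t}\binom{t}{K}$. Substituting Stirling's formula $n! = \sqrt{2\pi n}(n/e)^n e^{\theta_n}$, $\tfrac1{12n+1} < \theta_n < \tfrac1{12n}$, into $\binom{t}{t/2-m}$ and Taylor-expanding $f(u) := (1-u)\ln(1-u) + (1+u)\ln(1+u) = u^2 + \tfrac{u^4}{6} + \cdots$ at $u = 2m/t$ (this is, up to a constant, the expansion of binary entropy about $\tfrac12$) yields
\[
2^{-t}\binom{t}{K} \;=\; \sqrt{\tfrac{2}{\pi t}}\; e^{-2m^2/t}\; e^{\pm O(m^4/t^3)\,\pm\, O(1/t)} \;=\; \sqrt{\tfrac{2}{\pi t}}\; e^{-x^2/2}\; e^{\pm O(x^4/t)\,\pm\, O(1/c_1)} .
\]
Indeed $2m^2/t = x^2/2 + O(x/\sqrt t) = x^2/2 + O(1/c_1)$ since $m$ lies within $1$ of $x\sqrt t/2$; the entropy term $m^4/t^3 = \Theta(x^4/t)$ dominates all further corrections (e.g.\ $m^6/t^5 = (x^4/t)\cdot O(x^2/t) = (x^4/t)\cdot O(1/c_1^2)$), and the $(1-4m^2/t^2)^{-1/2}$ prefactor from Stirling is likewise absorbed into $e^{O(1/c_1^2)}$.

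Second, I would pass from this single term to the tail sum by sandwiching it between two geometric series. For the upper bound, $\binom{t}{k-1}/\binom{t}{k} = k/(t-k+1)$ is increasing in $k$, so for all $k\le K$ it is at most $q := K/(t-K+1) = (t/2-m)/(t/2+m+1) < 1$, whence $\sum_{k\le K}\binom{t}{k} \le \binom{t}{K}(1-q)^{-1} = \binom{t}{K}\cdot\frac{t/2+m+1}{2m+1} = \binom{t}{K}\cdot\frac{\sqrt t}{2x}(1+o(1))$; combined with the coefficient estimate this gives $F_t(-x\sqrt t/2) \le \frac{1}{\sqrt{2\pi}}\cdot\frac1x e^{-x^2/2}e^{O(x^4/t)+O(1/c_1)}$, which is at most $\frac1x e^{-\frac{x^2}{2}(1-c_2x^2/t)}$ once $c_1$ is large (so the leading constant stays below $1$, since $1/\sqrt{2\pi}<1$) and $c_2$ exceeds twice the implied constant. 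For the lower bound, I would keep only the block $\binom{t}{K-j}$ for $0\le j\le Cw$ ($C$ a large absolute constant) and bound each term below by $\binom{t}{K-j}/\binom{t}{K} = \prod_{i=1}^{j}\frac{K-i+1}{t-K+i} \ge q_-^{\,j}$ with $q_- := \frac{t/2-m-Cw}{t/2+m+Cw}$, giving $\sum_{k\le K}\binom{t}{k} \ge \binom{t}{K}\cdot\frac{1-q_-^{Cw+1}}{1-q_-}$. Since $m+Cw = \tfrac{x\sqrt t}{2}(1+o(1))$ (as $C\sqrt t/x \ll x\sqrt t/2$ when $c_1$ is large), we get $1-q_- = \tfrac{2x}{\sqrt t}(1+o(1))$ and $q_-^{Cw+1} = e^{-2C+o(1)}$, so $\frac{1-q_-^{Cw+1}}{1-q_-} \ge (1-\eta)\frac{\sqrt t}{2x}$ for any fixed $\eta>0$ upon choosing $C=C(\eta)$. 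Hence $F_t(-x\sqrt t/2) \ge \frac{1-\eta}{\sqrt{2\pi}}\cdot\frac1x e^{-x^2/2}e^{-O(x^4/t)-O(1/c_1)}$, and since $\tfrac1{\sqrt{2\pi}} \approx 0.399 > \tfrac13$, taking $\eta$ small, $c_1$ large, and $c_2$ large yields the stated lower bound.

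The main obstacle is purely the ordering and bookkeeping of constants, not any single estimate. The ``true'' leading constant in both bounds is $1/\sqrt{2\pi}\approx 0.399$ (coming from $\sqrt{2/(\pi t)}\cdot \tfrac{\sqrt t}{2x} = \tfrac1{x\sqrt{2\pi}}$), and it must stay strictly below $1$ (trivial) and strictly above $\tfrac13$ — which needs the geometric-truncation loss $\eta$ and the $O(1/c_1)$ slack from rounding $x\sqrt t/2$ up to an integer to be genuinely negligible — while simultaneously the two-sided entropy-expansion error $e^{\pm O(x^4/t)}$ is absorbed into the $1\mp c_2x^2/t$ factor. So the constants must be fixed in order: first $c_1$ large enough that $m = x\sqrt t/2 + O(1)$ perturbs exponent and prefactor negligibly and $w\asymp\sqrt t/x$ is long; then $C$ (hence $\eta$) so the geometric sum recovers a $(1-\eta)$-fraction of $\tfrac{\sqrt t}{2x}$; then $c_2$ large enough to dominate the $O(x^4/t)$ corrections in both directions. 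Verifying $0.399\,(1-\eta)\,e^{-O(1/c_1)} > \tfrac13$ and $0.399\,e^{O(1/c_1)} < 1$ is then the only numeric check; everything upstream is Stirling plus a geometric series.
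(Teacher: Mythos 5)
Your proposal is correct in substance but takes a genuinely different route from the paper. The paper proves Fact~\ref{thm:ahle} in a few lines by directly citing two results from \citep{Ahle:2022}: their equation (2.20), which expresses $F_t(-x\sqrt t/2)$ as $\frac{1}{\sqrt{2\pi}x}\exp(-tD(\tfrac12-\tfrac{x}{2\sqrt t}\|\tfrac12))$ up to a multiplicative error $1\pm c(\tfrac1{x^2}+\tfrac{x}{\sqrt t})$, and their equation (2.13), which Taylor-expands the KL divergence as $\frac{x^2}{2t}\pm c_2\frac{x^4}{2t^2}$. The two corrections are then absorbed into the $1\mp c_2x^2/t$ factor and the leading constants $\tfrac13$ and $1$ respectively by taking $c_1$ large. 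Your proof reproduces exactly this structure \emph{from scratch}: your Stirling/entropy expansion of $2^{-t}\binom{t}{K}$ is precisely the derivation of Ahle's (2.20)+(2.13) specialized to $p=\tfrac12$, and your geometric-series sandwich of the tail by $\binom{t}{K}\cdot\Theta(\sqrt t/x)$ is the derivation of the $\tfrac1{\sqrt{2\pi}x}$ prefactor with explicit two-sided error. What your approach buys is self-containedness and full transparency on where the constants $\tfrac13$ and $1$ come from; what the paper's approach buys is brevity, at the cost of opacity about how tight the stated constants really are.

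One small fix to the bookkeeping: your stated order of fixing the constants (``first $c_1$, then $C$, then $c_2$'') is backwards for the first two. Making $Cw\ll m$ requires $x^2\gg C$, hence $c_1^2\gg C$, and keeping the truncated block $\{K-Cw,\ldots,K\}$ inside $\{0,\ldots,K\}$ likewise requires $c_1$ large relative to $C$. So you must first fix the truncation-loss tolerance $\eta$ (say $\eta=0.1$, which determines $C=C(\eta)$), \emph{then} take $c_1$ large relative to $C$ so that the rounding, prefactor, and block-length conditions are all satisfied, and finally take $c_2$ to dominate the $O(x^4/t)$ corrections. With that ordering corrected, your numeric slack $\frac{1-\eta}{\sqrt{2\pi}}\approx 0.359 > \tfrac13$ and $\frac{1}{\sqrt{2\pi}} < 1$ goes through as claimed, and the argument is complete.
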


We will also use that, with high probability, no pair of our random vectors has too high of an inner product. This will be required to prove Claim \ref{clm:small}, i.e., that all near neighborhoods have small overlap. %Note that this claim strengthens our previous Claim \ref{clm:no_repeats}. 
The claim follows directly from a standard Chernoff bound and a union bound over all pairs $i,j$. %to each pair $i,j$ and then union bounding over all pairs.

\begin{claim}[Vectors are Not Too Similar]\label{clm:dotProductBound}
Let $x_1,\ldots,x_n$ be distributed as in \Cref{thm:lower_bound_detailed}. As long as $d \ge c_1 \log n$ for a universal constant $c_1$, with probability at least $99/100$, for all $i \neq j$, $\langle x_i, x_j \rangle \le c_u \sqrt{d \log n}$ for some fixed constant $c_u$.
\end{claim}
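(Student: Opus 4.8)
\textbf{Proof proposal for Claim~\ref{clm:dotProductBound}.} The plan is a textbook Chernoff-plus-union-bound argument. First I would reduce the inner product to a sum of independent Rademacher variables. For a fixed pair $i \neq j$, write $\langle x_i, x_j \rangle = \sum_{k=1}^{d} x_i[k]\cdot x_j[k]$. Since $x_i[k]$ and $x_j[k]$ are independent and uniform on $\{-1,+1\}$, each product $x_i[k]\cdot x_j[k]$ is itself uniform on $\{-1,+1\}$, and these products are mutually independent across $k$ because the coordinates of $x_1,\ldots,x_n$ are mutually independent. Equivalently, $\langle x_i, x_j\rangle = 2Y - d$ where $Y \sim \mathrm{Binomial}(d,1/2)$, so $\langle x_i,x_j\rangle$ is a centered sum of $d$ bounded independent random variables.

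Next I would apply a standard Chernoff/Hoeffding tail bound for such a sum: for every $t>0$, $\Pr\!\left[\langle x_i, x_j\rangle \ge t\right] \le e^{-t^2/(2d)}$. Setting $t = c_u\sqrt{d\log n}$ yields $\Pr\!\left[\langle x_i, x_j\rangle \ge c_u\sqrt{d\log n}\right] \le e^{-c_u^2(\log n)/2}$, which is at most $n^{-c_u^2/(2\ln 2)}$ with our base-$2$ convention for $\log$ (the precise conversion constant between $\ln$ and $\log$ is immaterial, since the claim only asserts the existence of \emph{some} fixed $c_u$).

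Finally, a union bound over all at most $n^2$ pairs $i\neq j$ shows that the probability that \emph{some} pair has $\langle x_i,x_j\rangle > c_u\sqrt{d\log n}$ is at most $n^{2 - c_u^2/(2\ln 2)}$. Choosing $c_u$ to be a sufficiently large universal constant (large enough that the exponent is at most $-4$, say) makes this bound at most $1/100$ for all sufficiently large $n$, which gives the claim. I do not expect any real obstacle here: the argument is a warm-up lemma, and the only things to pin down are the numerical value of $c_u$ and the bookkeeping of the union bound over ordered versus unordered pairs (which only affects constants, and in any case $\langle x_i,x_j\rangle=\langle x_j,x_i\rangle$). Note that the hypothesis $d \ge c_1\log n$ is not actually needed for this particular claim, but it is consistent with the standing assumptions of \Cref{thm:lower_bound_detailed}.
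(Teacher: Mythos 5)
Your argument is exactly the one the paper has in mind: the paper's own proof is a one-line remark that the claim ``follows directly from a standard Chernoff bound and a union bound over all pairs $i,j$,'' and you have filled in precisely those details (reduction to a centered sum of $d$ independent Rademacher variables, Hoeffding tail $e^{-t^2/(2d)}$, choice $t = c_u\sqrt{d\log n}$, union bound over $\le n^2$ pairs, then take $c_u$ large). Your side observation that the hypothesis $d \ge c_1 \log n$ is not actually used here is also correct.
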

%\begin{proof}
%    This claim follows directly from applying a standard Chernoff bound to each pair $i,j$ and then union bounding over all pairs. Alternatively, we could apply the upper bound in Fact \ref{thm:ahle}.
%\end{proof}

% Finally, we require the following technical claim, which we will use to set $c_h$ in Definition \ref{def:neighbor}. Our goal is to ensure that the probability of a vector being  in the near neighborhood of another is $\Theta(1/\sqrt{n})$. We give  a short proof in Appendix \ref{app:appendix_simple_lower}. 
% \Cam{Appendice the proof of this claim.}
% \begin{claim}\label{clm:exact}
% For any large enough $n$, there is some value of $c \in [1/3,1]$ such that $$\frac{1}{\sqrt{\ln n}} \cdot \exp(-c^2\cdot \ln n) = \frac{1}{\sqrt{n}.}$$
% \end{claim}

Finally, we require the following technical claim, which we will use to set $c_h$ in Definition \ref{def:neighbor}. Our goal is to ensure that the probability of a vector being  in the near neighborhood of another is $\Theta(1/\sqrt{n})$. 

\begin{claim}\label{clm:exact}
For any large enough $n$, there is some value of $c \in [1/3,1]$ such that $$\frac{1}{\sqrt{\ln n}} \cdot \exp(-c^2\cdot \ln n) = \frac{1}{\sqrt{n}.}$$
\end{claim}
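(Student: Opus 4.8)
The plan is to treat this as a one-variable continuity argument. Define $f(c) = \frac{1}{\sqrt{\ln n}}\exp(-c^2\ln n)$ for $c \in [1/3,1]$ with $n$ fixed. This function is continuous and strictly decreasing in $c$ (the exponent $-c^2\ln n$ is strictly decreasing), so by the intermediate value theorem it suffices to check that $\frac{1}{\sqrt n}$ lies between $f(1)$ and $f(1/3)$ for all sufficiently large $n$.

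For the lower endpoint, $f(1) = \frac{1}{\sqrt{\ln n}}\,n^{-1}$, and we want $f(1) \le n^{-1/2}$; rearranging, this is equivalent to $n^{-1/2} \le \sqrt{\ln n}$, which holds for all $n \ge 3$ (indeed $n^{-1/2}\le 1\le \sqrt{\ln n}$). For the upper endpoint, $f(1/3) = \frac{1}{\sqrt{\ln n}}\,n^{-1/9}$, and we want $f(1/3) \ge n^{-1/2}$; rearranging, this is equivalent to $n^{7/18} \ge \sqrt{\ln n}$, which holds for all large enough $n$ since any fixed positive power of $n$ eventually dominates $\sqrt{\ln n}$. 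Combining the two bounds, $f(1) \le n^{-1/2} \le f(1/3)$ for large $n$, so there is some $c\in[1/3,1]$ with $f(c) = n^{-1/2}$, as claimed.

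Alternatively, one can exhibit the root explicitly: solving $\frac{1}{\sqrt{\ln n}}\exp(-c^2\ln n) = \exp(-\tfrac12\ln n)$ by taking logarithms gives $c^2 = \tfrac12 - \frac{\ln\ln n}{2\ln n}$, i.e. $c = \sqrt{\tfrac12 - \frac{\ln\ln n}{2\ln n}}$, and then one checks $c \to 1/\sqrt 2$ as $n\to\infty$, so in particular $c \in [1/3,1]$ for large $n$. Either route works; the explicit form has the mild advantage of pinning down that the relevant constant is essentially $1/\sqrt 2$, which is what makes the exponent in Claim \ref{clm:large} come out to $\sqrt n/6$.

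There is no real obstacle here; the only thing to be careful about is the quantifier ``for any large enough $n$,'' which is genuinely needed — for small $n$ the quantity $\frac{\ln\ln n}{\ln n}$ is not small (and can be negative or undefined), so $c^2$ need not lie in $[1/9, 1]$. Making the threshold on $n$ explicit, if desired, just amounts to solving $n^{7/18} \ge \sqrt{\ln n}$, but since all of Claims \ref{clm:large}--\ref{clm:small} already only hold for $n$ above a universal constant, carrying an unspecified threshold is harmless.
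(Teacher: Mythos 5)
Your proof is correct and takes essentially the same route as the paper: monotonicity and continuity in $c$, then an intermediate value argument by checking the endpoints $c=1$ and $c=1/3$. The explicit closed form $c=\sqrt{\tfrac12-\tfrac{\ln\ln n}{2\ln n}}\to 1/\sqrt2$ is a nice extra observation (the paper does not spell it out), but the core argument is identical.
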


\begin{proof}
Observe that if we set $c = 1$, then $$\frac{1}{ \sqrt{\ln n}} \cdot \exp(-c^2\cdot \ln n) = \frac{1}{n\sqrt{\ln n}} < \frac{1}{\sqrt{n}}.$$
Further, if we set $c = 1/3$, we have for large enough $n$,
$$\frac{1}{ \sqrt{\ln n}} \cdot \exp(-c^2\cdot \ln n) =  \frac{1}{ \sqrt{\ln n}}  \exp(-1/9 \ln n) = \frac{1}{ \sqrt{\ln n} \cdot n^{1/9}} > \frac{1}{\sqrt{n}}.$$
Thus, for some setting of $c \in [1/3,1]$ the claim holds.
\end{proof} 
% \begin{proof}
% Observe that if we set $c = 1$, then $$\frac{1}{ \sqrt{\ln n}} \cdot \exp(-c^2\cdot \ln n) = \frac{1}{n\sqrt{\ln n}} < \frac{1}{\sqrt{n}}.$$%Observe that for 
% Further, if we set $c = 1/3$, we have for large enough $n$,
% $$\frac{1}{ \sqrt{\ln n}} \cdot \exp(-c^2\cdot \ln n) =  \frac{1}{ \sqrt{\ln n}}  \exp(-1/9 \ln n) = \frac{1}{ \sqrt{\ln n} \cdot n^{1/9}} > \frac{1}{\sqrt{n}}.$$
% Thus, for some setting of $c \in [1/3,1]$ the claim holds.
% \end{proof}

%The proof of this claim can be found in Appendix \ref{app:appendix}.

\begin{proof}[\textbf{Proof of Claim \ref{clm:large}}]
Fix some $x_j$. Then for any $i \neq j$, the event that $i \in \mathcal{O}_j$ is exactly the event that a binomial random variable $Bin(d,1/2)$ exceeds its mean by $\geq c_h \sqrt{d \ln n}/2$. I.e., letting $F_d(\cdot)$ be the CDF of the mean centered binomial random variable, we have 
\begin{align}\label{eq:cdf}
\Pr(x_i \in \mathcal{O}_j) = F_d(-c_h \sqrt{d \ln n}/2).
\end{align}
Plugging in the lower bound of Fact \ref{thm:ahle}, we have for some constant $c_2$:
\begin{align*}
\Pr(x_i \in \mathcal{O}_j) \ge \frac{1}{3\cdot c_h \sqrt{\ln n}} \cdot \exp \left (-\frac{c_h^2}{2} \cdot \ln n \cdot \left (1+\frac{c_2  c_h^2 \ln n}{d} \right ) \right ). %= \frac{1}{3 c_h} \exp(-c_h^2/2 \cdot \ln n - \ln\ln n/2)  
\end{align*}
Using that $d \ge c_1 \ln n$ for some sufficiently large constant $c_1$, $1+\frac{c_2  c_h^2 \ln n}{d} \in [1,2]$. Then, by Claim \ref{clm:exact}, we can set $c_h$ to some value in $[1/3,1]$ to obtain: 
%Observe that we can set $c_h$ to a constant in $[1/2, 1]$ such that the righthand side of the above equation is exactly equal to $\frac{1}{3 \sqrt{n}}$. I.e.,
\begin{align}\label{eq:exact}
\Pr(x_i \in \mathcal{O}_j) \ge \frac{1}{3 c_h \sqrt{n}} \ge \frac{1}{3 \sqrt{n}}.
\end{align}
From \eqref{eq:exact}, the claim follows by noting that the events $i \in \mathcal{O}_j$ are independent and each happens with probability at least $\frac{1}{3\sqrt{n}}$, so $\E\left[ |\mathcal{O}_j|\right] \ge \sqrt{n}/3$. Thus, by a Chernoff bound, with probability at least 
$1-2^{\Omega(\sqrt{n})}$,  $|\mathcal{O}_j| \ge \sqrt{n}/6$. 
Taking a union bound over all $j$ then gives that, for sufficiently large $n$, $|\mathcal{O}_j| \ge \sqrt{n}/6$ for all $j$ with probability at least $99/100$, giving the claim. 
\end{proof}

\begin{proof}[\textbf{Proof of Claim \ref{clm:small}}]
Fix $i \neq j$ and consider some $k$ which is not equal to either $i$ or $j$. Let $z$ be the number of positions in which $x_i$ and $x_j$ take the same value. Note that, assuming the event of Claim \ref{clm:dotProductBound} holds, $z \le d/2 + c_u \sqrt{d \log n}/2$. Denote $d/2 + c_u \sqrt{d \log n}/2$ by $\mu$.

%If $t = c_t \log^3 n$ for some sufficiently large constant $c_t$ then we have $z \le t/2 \cdot (1+1/\log n)$.

 By Definition \ref{def:neighbor}, to have $x_k \in \mathcal{O}_i \cap \mathcal{O}_j$ we need $\min(\langle x_k,x_i\rangle, \langle x_k,x_j\rangle) \ge c_h \sqrt{d \log n}$. That is, we need $x_k$ to match each of $x_i$ and $x_j$ in at least $d/2 + c_h/2 \cdot  \sqrt{d \log n}$ positions. On the $d-z$ positions where $x_i$ and $x_j$ differ, the best case scenario (i.e., the scenario maximizing $\min(\langle x_k,x_i\rangle, \langle x_k,x_j\rangle)$) is when $x_k$ matches exactly half of these positions for each of $x_i,x_j$ (i.e., matches $\frac{d-z}{2}$ positions). 
 
 Assuming this case, to have $x_k \in \mathcal{O}_i \cap \mathcal{O}_j$, on the $z$ positions where $x_i$ and $x_j$ are identical, $x_k$ must match $z/2 + c_h/2 \cdot \sqrt{d \log n}$ positions. The probability of this happening is equivalent to the probability that a binomial random variable with $z$ trials exceeds its mean by $\geq c_h/2 \cdot  \sqrt{d \log n}$. Since $z \leq \mu$, we can upper bound this probability by the probability that a binomial random variable with $\mu$ trials exceeds its mean by $c_h/2 \cdot  \sqrt{d \log n}$. We rewrite the upper bound as:
 $$c_h/2 \cdot  \sqrt{d \log n} = c_h/2 \cdot \sqrt{d/\mu} \cdot \sqrt{\mu \log n} = \frac{c_h}{\sqrt{2}} \cdot  \sqrt{\frac{1}{1+c_u \sqrt{\frac{\log n}{d}}}} \cdot \sqrt{\mu \log n},
 $$
 where we use  $\frac{d/2}{\mu} = \frac{1}{1+c_u \sqrt{\frac{\log n}{d}}}$.
 Via Fact \ref{thm:ahle}, we now upper  bound our probability of interest by:
% \begin{align*}
% \Pr(x_k \in \mathcal{O}_i \cap \mathcal{O}_j) \le \frac{\sqrt{1+1/\log^{1/2}n}}{\sqrt{2} c_h \sqrt{z \log n }} \cdot \exp \left (-(c_h \sqrt{t \log n/z})^2/2 \right )
% \end{align*}
  \begin{align*}
 \Pr(x_k \in \mathcal{O}_i \cap \mathcal{O}_j) &\le \frac{\sqrt{1+c_u \sqrt{\frac{\log n}{d}}}}{\sqrt{2} c_h \sqrt{\log n }}\cdot  \exp \left (-\frac{c_h^2 \log n}{1+c_u \sqrt{\frac{\log n}{d}}} \cdot \left (1 - \frac{2c_2 c_h^2 \cdot \log n}{(1+c_u \sqrt{\frac{\log n}{d}}) \cdot \mu} \right ) \right ).
 \end{align*}
  Observe that since in the claim we require $d \ge c_1 \log n$ for a large constant $c_1$, we have that $1 \leq 1 + c_u \sqrt{\frac{\log n}{d}} \le 2$. So,
  we can simplify the above to:
\begin{align*}
\Pr(&x_k \in \mathcal{O}_i \cap \mathcal{O}_j)  \le \frac{\sqrt{2}}{c_h \sqrt{\log n }} \cdot \exp \left (-\frac{c_h^2 \log n}{1+c_u \sqrt{\frac{\log n}{d}}} \cdot \left (1-\frac{2 c_2 c_h^2 \cdot \log n}{\mu}\right ) \right )\\
&= \frac{\sqrt{2}}{c_h \sqrt{\log n }} \cdot \exp \left (-c_h^2 \log n \cdot \left (1+\frac{c_2 c_h^2 \log n}{d} \right ) \cdot \frac{\left (1-\frac{2 c_2 c_h^2 \cdot \log n}{\mu}\right )}{\left (1+\frac{c_2 c_h^2 \log n}{d} \right ) \cdot \left ( 1+c_u \sqrt{\frac{\log n}{d}}\right )} \right ).
\end{align*}
Now, since we set $d \ge c_1 \log n$ for a large constant $c_1$,  we can ensure that  have that $\frac{\log n}{d} \leq c_3 \frac{\sqrt{\log }}{\sqrt{d}}$ for any small constant $c_3$. Using this fact, along with the fact that $\mu \ge d/2$, we have:
\begin{align*}
    \frac{\left (1-\frac{2 c_2 c_h^2 \cdot \log n}{\mu}\right )}{\left (1+\frac{c_2 c_h^2 \log n}{d} \right ) \cdot \left ( 1+c_u \sqrt{\frac{\log n}{d}}\right )} \ge \frac{\left (1-.5c_u \sqrt{\frac{\log n}{d}}\right )}{\left (1+.5c_u \sqrt{\frac{\log n}{d}} \right ) \cdot \left ( 1+c_u \sqrt{\frac{\log n}{d}}\right )} \ge
     1 - 2c_u \cdot \sqrt{\frac{\log n}{d}}.
\end{align*}
For the second inequality, we used the fact that $\frac{(1-.5x)}{(1+.5x)(1+x)} \geq 1-2x$ for all $x$. 
Next, setting $c_h \in [1/3,1]$ exactly as in \eqref{eq:exact} so that $\frac{1}{\sqrt{\log n}} \cdot \exp\left (-\frac{c_h^2}{2}  \log n \cdot \left (1+\frac{c_2 c_h^2 \log n}{d} \right ) \right ) = \frac{1}{\sqrt{n}}$, we have:
 \begin{align*}
     \Pr(x_k \in \mathcal{O}_i \cap \mathcal{O}_j) &\le \frac{\sqrt{2}}{c_h \sqrt{\log n}} \cdot \exp \left (-c_h^2 \log n \cdot \left (1+\frac{c_2 c_h^2 \log n}{d} \right ) \cdot \left (1-2c_u \sqrt{\frac{\log n}{d}}\right ) \right )\\
    &\le \frac{\sqrt{2}}{c_h n} \cdot\exp \left (2c_h^2 c_u \log n \cdot \left (1+\frac{c_2 c_h^2 \log n}{d} \right ) \cdot \sqrt{\frac{\log n}{d}} \right )\\
    &\le \frac{4 \sqrt{2}}{n} \cdot\exp \left (\frac{c \log^{3/2} n}{d}\right ) = \frac{4 \sqrt{2}}{n} \cdot n^{c \sqrt{\frac{\log n}{d}}},
 \end{align*}
 for some constant $c$.
Finally, observe that, conditioned on $x_i$ and $x_j$ sharing $z$ entries, the event $x_k \in \mathcal{O}_i \cap \mathcal{O}_j$ is independent for each $x_k$. Thus, by a standard Chernoff bound, with probability at least $1-1/n^{c'}$, we have $|\mathcal{O}_i \cap \mathcal{O}_j| \le 10\max(\log n, n^{c \sqrt{\frac{\log n}{d}}})$ for some large constant $c'$. Union bounding over all $O(n^2)$ pairs $i \neq j$ then gives the claim.  
\end{proof}

\subsection{Maximum Degree Lower Bound}\label{app:appendix_max_deg_lower}

In \Cref{thm:main_upper_bound,thm:main_lower_bound}, we focus on the \emph{average degree} of navigable graphs. While a reasonable metric, we might also be interested in the \emph{maximum degree}, which governs the worst-case complexity of a single step of the greedy algorithms. Here we give a simple argument showing that unfortunately, on a worst-case input instance, it is not possible to achieve maximum degree better than the trivial $n-1$ given by the complete graph.

\begin{theorem}\label{theorem:max_degree_lb}
    There exists a set of points $x_1, \ldots, x_n \in \R^{d}$ for $d = O(\log n)$ such that any navigable graph for these points under the Euclidean metric has maximum out-degree $d-1$. 
\end{theorem}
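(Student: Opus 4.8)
The plan is to construct a ``star'' point set: $n$ points such that one designated point, say $x_1$, is the \emph{strict} nearest neighbor of every other point, meaning $D(x_t, x_1) < D(x_t, x_h)$ for all $t \neq 1$ and all $h \notin \{1,t\}$. Given such a configuration, I claim that every navigable graph $G = (V,E)$ must contain the edge $(1,t)$ for every $t \neq 1$, so that node $1$ has out-degree $n-1$ --- the largest possible in an $n$-vertex graph. To see this, run \Cref{alg:greedy_routing} from the start node $s = 1$ with query $\bar x = x_t$: it begins at node $1$, and among all vertices the unique one strictly closer to $x_t$ than $x_1$ is $t$ itself (since $D(x_t,x_t) = 0$ while every $x_h$ with $h \notin \{1,t\}$ is strictly farther from $x_t$ than $x_1$, so there are no ties either). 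Hence, if $t \notin \mathcal{N}(1)$, the best out-neighbor of node $1$ is no closer to $x_t$ than $x_1$ is, the search halts at node $1$, and returns $x_1 \neq x_t$, contradicting navigability; the same holds trivially if $\mathcal{N}(1) = \emptyset$. Therefore $(1,t) \in E$ for all $t \neq 1$, which gives the stated bound on the maximum out-degree.

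It remains to realize such a configuration in dimension $d = O(\log n)$. I would set $x_1 = \mathbf{0}$ and let $x_2, \dots, x_n$ be unit vectors, so that $D(x_t,x_1) = 1$ whereas $D(x_t,x_s) = \sqrt{2 - 2\langle x_t, x_s\rangle}$, which is strictly larger than $1$ exactly when $\langle x_t, x_s\rangle < 1/2$. So it suffices to find $n-1$ unit vectors in $\R^d$ with all pairwise inner products strictly below $1/2$ --- a standard ``nearly orthogonal vectors'' fact. I would obtain them probabilistically: draw the entries of $x_2,\dots,x_n$ independently and uniformly from $\{-1/\sqrt{d}, +1/\sqrt{d}\}$, so that for a fixed pair $\langle x_s, x_t\rangle = \frac{1}{d}\sum_{i=1}^{d}\varepsilon_i$ with the $\varepsilon_i$ i.i.d.\ uniform on $\{-1,1\}$. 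A Hoeffding bound gives $\Pr[\langle x_s, x_t\rangle \geq 1/2] \leq e^{-d/8}$, and a union bound over the $\binom{n-1}{2}$ pairs shows that for $d = C\log n$ with $C$ a large enough constant such a configuration exists (in fact occurs with probability $1 - o(1)$). The $n$ points are then automatically distinct (the origin is at distance $1$ from each unit vector, and any two distinct unit vectors are at distance $> 1$), as needed for navigability to be well-defined.

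Combining the two pieces proves the theorem: fix the configuration above, let $G$ be any navigable graph for it, and the greedy-search argument forces node $1$ to have out-degree $n-1$. The argument is short, and I do not expect a genuine obstacle; the only real ingredient is the existence of the nearly orthogonal vectors. The closest thing to a difficulty arises only if one wants an \emph{explicit} rather than probabilistic construction --- one would then replace the random vectors by, e.g., a Hadamard-matrix or error-correcting-code construction of almost-orthogonal unit vectors --- but the nonconstructive existence argument suffices for the statement as given.
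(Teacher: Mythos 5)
Your proof is correct and uses the same core idea as the paper: build a ``star'' point set in which one point is the strict nearest neighbor of every other point, so that greedy search starting at the center with any other target would get stuck immediately unless the center has an out-edge to every other node, forcing out-degree $n-1$. The only difference is in how you realize such a configuration in $d = O(\log n)$: the paper places the $n-1$ satellites as standard basis vectors in $\R^{n-1}$ (pairwise distance $\sqrt{2}$, distance $1$ to the origin) and then invokes Johnson--Lindenstrauss to compress to $O(\log n)$ dimensions, whereas you directly draw random $\pm 1/\sqrt{d}$ unit vectors and use a Hoeffding plus union bound to get pairwise inner products below $1/2$ --- effectively inlining the same random-projection argument, which is slightly more self-contained but mathematically the same.
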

\begin{proof}
    First we prove the bound for a high dimensional point set. For all $i < n$, let $x_i$ be a standard basis vector with a $1$ in position $i$. Let $x_n$ be the all zeros vector. As we can see, $x_n$ has Euclidean distance $1$ from $x_1, \ldots, x_{n-1}$. On the other hand, for all $i\neq j$ where $i,j\neq n$, $\|x_i - x\|_2 = \sqrt{2}$. In other words, $x_n$ is the closest neighbor to each of $x_1, \ldots, x_{n-1}$. It follows that any navigable graph \emph{must} contain an edge from $x_1$ to each of these points, resulting in maximum degree $n$. 

    To extend the construction to low dimensions, we simply use the Johnson-Lindenstrauss Lemma to embed the point set above into $c\log n$ dimensions. As long as the constant $c$ is sufficiently large, all distances will be preserved to within error, say, $\pm 0.1$. So, $x_n$ will still be the nearest neighbor of all other points, and we will still require it to have out-degree $n-1$.
\end{proof}

\section*{Acknowledgements}
Christopher Musco was supported in part by NSF award IIS-2106888. Cameron Musco was supported in part by NSF award CCF-2046235 and an Adobe Research grant.

\bibliography{neurips_2024}

\begin{thebibliography}{26}
\providecommand{\natexlab}[1]{#1}
\providecommand{\url}[1]{\texttt{#1}}
\expandafter\ifx\csname urlstyle\endcsname\relax
  \providecommand{\doi}[1]{doi: #1}\else
  \providecommand{\doi}{doi: \begingroup \urlstyle{rm}\Url}\fi

\bibitem[Ahle(2017)]{Ahle:2022}
Thomas Ahle.
\newblock Asymptotic tail bound and applications.
\newblock \url{https://thomasahle.com/papers/tails.pdf}, 2017.

\bibitem[Andoni et~al.(2015)Andoni, Indyk, Laarhoven, Razenshteyn, and
  Schmidt]{AndoniIndykLaarhoven:2015}
Alexandr Andoni, Piotr Indyk, Thijs Laarhoven, Ilya Razenshteyn, and Ludwig
  Schmidt.
\newblock Practical and optimal {LSH} for angular distance.
\newblock In \emph{\NIPS{2015}}, 2015.

\bibitem[Arpino et~al.(2024)Arpino, Dmitriev, and
  Grometto]{ArpinoDmitrievGrometto:2023}
Gabriel Arpino, Daniil Dmitriev, and Nicolo Grometto.
\newblock Greedy heuristics and linear relaxations for the random hitting set
  problem.
\newblock In \emph{Approximation, Randomization, and Combinatorial
  Optimization. Algorithms and Techniques (APPROX/RANDOM)}, 2024.

\bibitem[Arya and Mount(1993)]{AryaMount:1993}
Sunil Arya and David~M. Mount.
\newblock Approximate nearest neighbor queries in fixed dimensions.
\newblock In \emph{Proceedings of the Symposium on Discrete Algorithms (SODA)},
  1993.

\bibitem[Bogu{\~n}{\'a} et~al.(2009)Bogu{\~n}{\'a}, Krioukov, and
  Claffy]{BogunaKrioukovClaffy:2009}
Mari{\'a}n Bogu{\~n}{\'a}, Dmitri Krioukov, and K.~C. Claffy.
\newblock Navigability of complex networks.
\newblock \emph{Nature Physics}, 5\penalty0 (1):\penalty0 74--80, 2009.

\bibitem[Cram{\'e}r(2022)]{Cramer:2022}
Harald Cram{\'e}r.
\newblock On a new limit theorem in probability theory (translation of 'sur un
  nouveau th\'eor\`eme-limite de la th\'eorie des probabilit\'es').
\newblock \emph{\arXiv{1802.05988}}, 2022.
\newblock Translated by Hugo Touchette.

\bibitem[Fu et~al.(2019)Fu, Xiang, Wang, and Cai]{FuXiangWang:2019}
Cong Fu, Chao Xiang, Changxu Wang, and Deng Cai.
\newblock Fast approximate nearest neighbor search with the navigating
  spreading-out graph.
\newblock \emph{Proceedings of the VLDB Endowment}, 12\penalty0 (5):\penalty0
  461--474, 2019.

\bibitem[Har-Peled(2001)]{Har-Peled:2001}
Sariel Har-Peled.
\newblock A replacement for voronoi diagrams of near linear size.
\newblock In \emph{\FOCS{2001}}, 2001.

\bibitem[Indyk and Motwani(1998)]{IndykMotwani:1998}
Piotr Indyk and Rajeev Motwani.
\newblock Approximate nearest neighbors: towards removing the curse of
  dimensionality.
\newblock In \emph{Proceedings of the ACM Symposium on Theory of Computing
  (STOC)}, 1998.

\bibitem[Indyk and Xu(2023)]{IndykXu:2023}
Piotr Indyk and Haike Xu.
\newblock Worst-case performance of popular approximate nearest neighbor search
  implementations: Guarantees and limitations.
\newblock \emph{\NIPS{2023}}, 2023.

\bibitem[J{\'e}gou et~al.(2011)J{\'e}gou, Douze, and
  Schmid]{JegouDouzeSchmid:2011}
Herve J{\'e}gou, Matthijs Douze, and Cordelia Schmid.
\newblock Product quantization for nearest neighbor search.
\newblock \emph{IEEE Transactions on Pattern Analysis and Machine
  Intelligence}, 33\penalty0 (1):\penalty0 117--128, 2011.

\bibitem[Johnson et~al.(2021)Johnson, Douze, and Jegou]{JohnsonDouzeJegou:2021}
Jeff Johnson, Matthijs Douze, and Herv\'{e} Jegou.
\newblock Billion-scale similarity search with {GPUs}.
\newblock \emph{IEEE Transactions on Big Data}, 7\penalty0 (03):\penalty0
  535--547, 2021.

\bibitem[Klee(1980)]{Klee:1980}
Victor Klee.
\newblock On the complexity of d-dimensional {V}oronoi diagrams.
\newblock \emph{Archiv der Mathematik}, 34\penalty0 (1):\penalty0 75--80, 1980.

\bibitem[Kleinberg(2000{\natexlab{a}})]{Kleinberg:2000}
Jon Kleinberg.
\newblock The small-world phenomenon: an algorithmic perspective.
\newblock In \emph{\STOC{2000}}, 2000{\natexlab{a}}.

\bibitem[Kleinberg(1997)]{Kleinberg:1997}
Jon~M. Kleinberg.
\newblock Two algorithms for nearest-neighbor search in high dimensions.
\newblock In \emph{\STOC{1997}}, 1997.

\bibitem[Kleinberg(2000{\natexlab{b}})]{Kleinberg:2000b}
Jon~M. Kleinberg.
\newblock Navigation in a small world.
\newblock \emph{Nature}, 406\penalty0 (6798):\penalty0 845--845,
  2000{\natexlab{b}}.

\bibitem[Kushilevitz et~al.(1998)Kushilevitz, Ostrovsky, and
  Rabani]{KushilevitzOstrovskyRabani:1998}
Eyal Kushilevitz, Rafail Ostrovsky, and Yuval Rabani.
\newblock Efficient search for approximate nearest neighbor in high dimensional
  spaces.
\newblock In \emph{\STOC{1998}}, 1998.

\bibitem[Laarhoven(2018)]{Laarhoven:2018}
Thijs Laarhoven.
\newblock Graph-based time-space trade-offs for approximate near neighbors.
\newblock In \emph{\SOCG{2018}}, 2018.

\bibitem[Malkov and Yashunin(2020)]{MalkovYashunin:2020}
Yu~A. Malkov and D.~A. Yashunin.
\newblock Efficient and robust approximate nearest neighbor search using
  hierarchical navigable small world graphs.
\newblock \emph{IEEE Transactions on Pattern Analysis and Machine
  Intelligence}, 42\penalty0 (4):\penalty0 824--836, 2020.

\bibitem[Malkov et~al.(2014)Malkov, Ponomarenko, Logvinov, and
  Krylov]{MalkovPonomarenkoLogvinov:2014}
Yury Malkov, Alexander Ponomarenko, Andrey Logvinov, and Vladimir Krylov.
\newblock Approximate nearest neighbor algorithm based on navigable small world
  graphs.
\newblock \emph{Information Systems}, 45:\penalty0 61--68, 2014.

\bibitem[Milgram(1967)]{Milgram:1967}
Stanley Milgram.
\newblock The small world problem.
\newblock \emph{Psychology today}, 2\penalty0 (1):\penalty0 60--67, 1967.

\bibitem[Prokhorenkova and Shekhovtsov(2020)]{ProkhorenkovaShekhovtsov:2020}
Liudmila Prokhorenkova and Aleksandr Shekhovtsov.
\newblock Graph-based nearest neighbor search: From practice to theory.
\newblock In \emph{\ICML{2020}}, 2020.

\bibitem[Subramanya et~al.(2019)Subramanya, Devvrit, Kadekodi, Krishaswamy, and
  Simhadri]{SubramanyaDevvritKadekodi:2019}
Suhas~Jayaram Subramanya, Devvrit, Rohan Kadekodi, Ravishankar Krishaswamy, and
  Harsha~Vardhan Simhadri.
\newblock {DiskANN:} fast accurate billion-point nearest neighbor search on a
  single node.
\newblock In \emph{\NIPS{2019}}, 2019.

\bibitem[Toussaint(1980)]{Toussaint:1980}
Godfried~T. Toussaint.
\newblock The relative neighbourhood graph of a finite planar set.
\newblock \emph{Pattern Recognition}, 12\penalty0 (4):\penalty0 261--268, 1980.

\bibitem[Vercellis(1984)]{Vercellis:1984}
Carlo Vercellis.
\newblock A probabilistic analysis of the set covering problem.
\newblock \emph{Annals of Operations Research}, 1\penalty0 (3):\penalty0
  255--271, 1984.

\bibitem[Watts and Strogatz(1998)]{WattsStrogatz:1998}
Duncan~J. Watts and Steven~H. Strogatz.
\newblock Collective dynamics of `small-world' networks.
\newblock \emph{Nature}, 393\penalty0 (6684):\penalty0 440--442, 1998.

\end{thebibliography}

\appendix

\section{Omitted Proofs}\label{app:appendix_simple_lower}

Below we gives the proofs of some simple technical claims required to prove our main lower bound, Theorem \ref{thm:lower_bound_detailed}. First, we prove Claim \ref{clm:no_repeats}, which establishes that, with high probability, the hard input distribution of Theorem \ref{thm:lower_bound_detailed} produces a set of distinct points.. % to prove exactly this result with high probability. 

\begin{repclaim}{clm:no_repeats}[No Repeated Points]
    Let $x_1,\ldots, x_n$ be distributed as in \Cref{thm:lower_bound_detailed}. As long as $d \ge c_1 \log n$ for a universal constant $c_1$, then with probability at least $99/100$, all vectors in this set are distinct.
\end{repclaim}

\begin{proof}
   Consider two points $x_i,x_j$. The probability that these points are identical equals $\frac{1}{2}^d \leq \frac{1}{n^{c_1}}$. Then by a union bound over all pairs $i,j$, we have that all points are distinct with probability at least $1 - \frac{1}{n^{c_1-2}}$, which is greater than $99/100$ for sufficiently large $c_1$. 
\end{proof}

Next we give a short derivation of Fact \ref{thm:ahle}, which gives a sharp bound on the CDF of a binomial distribution, and is used in proving Claims \ref{clm:large} and \ref{clm:small}. 

\begin{repfact}{thm:ahle}[Binomial CDF Bound, 2.20 of \citep{Ahle:2022}]
Let $F_t(\cdot)$ be the CDF of a mean centered binomial random variable  with $t$ trials and success probability $1/2$. There are universal constants $c_1,c_2$ such that, for any $x$ satisfying $c_1 \le x \le \sqrt{t}/c_1$,
$$ \frac{1}{3x} e^{-\frac{x^2}{2} \cdot \left (1+\frac{c_2x^2}{t}\right )}\le F_t(-x \cdot \sqrt{t}/2) \le \frac{1}{x} e^{-\frac{x^2}{2} \cdot \left (1-\frac{c_2x^2}{t}\right )}.$$
\end{repfact}

\begin{proof}
    Applying 2.20 of \citep{Ahle:2022} with $p = q = 1/2$ gives that 
    $$F_t(-x \cdot \sqrt{t}/2) \in \frac{1}{\sqrt{2\pi} x}\exp \left (-t D \left (\frac{1}{2}- \frac{x}{2\sqrt{t}} \big \| \frac{1}{2} \right) \right ) \cdot \left (1 \pm c \left (\frac{1}{x^2} +\frac{x}{\sqrt{t}} \right )\right ),$$
    for some constant $c$, where use the notation $D(a \| b) \eqdef a \log(a/b) + (1-a) \log((1-a)/(1-b))$. This is the  KL divergence between two Bernoulli distributions with success probabilities $a$ and $b$. 
    We can then apply equation (2.13) of \citep{Ahle:2022} to claim that, for a constant $c_2$, 
    \begin{align*}
        D \left (\frac{1}{2}- \frac{x}{2\sqrt{t}} \big \| \frac{1}{2} \right) \in \frac{x^2}{2t} \pm c_2 \frac{x^4}{2t^{2}},
    \end{align*}
Plugging back in, we have that for constants $c$ and $c_2$,
    \begin{align*}
        F_t(-x \cdot \sqrt{t}/2) \in \frac{1}{\sqrt{2\pi} x}\exp \left (-\frac{x^2}{2} \cdot \left (1 \pm \frac{c_2x^2}{t} \right )\right ) \cdot \left (1 \pm c \left (\frac{1}{x^2} +\frac{x}{\sqrt{t}} \right )\right ).
    \end{align*}

    If we assume $c_1 \le x \le \sqrt{t}/c_1$ for large enough constant $c_1$ then $\frac{1}{3x} \le \frac{1}{\sqrt{2\pi}x} \cdot \left (1 - c \left (\frac{1}{x^2} +\frac{x}{\sqrt{t}} \right ) \right)$ 
    and $\frac{1}{\sqrt{2\pi}x} \cdot \left (1 + c \left (\frac{1}{x^2} +\frac{x}{\sqrt{t}} \right)\right) \le \frac{1}{x},$ which completes the proof. 
\end{proof}

\end{document}